\documentclass[a4paper,twocolumn,11pt,unpublished]{quantumarticle}
\pdfoutput=1

\usepackage[utf8]{inputenc}
\usepackage[english]{babel}
\usepackage[T1]{fontenc}
\usepackage{amsmath,amssymb,amsfonts,amsthm,mathtools}
\usepackage{graphicx}
\usepackage{booktabs}
\usepackage{array}
\usepackage{enumitem}
\usepackage{multirow}
\usepackage{float}
\usepackage{placeins}
\usepackage{xcolor}
\usepackage{hyperref}
\usepackage[nameinlink,capitalise,noabbrev]{cleveref}
\usepackage[numbers,sort&compress]{natbib}

\hypersetup{colorlinks=true, linkcolor=blue, citecolor=blue, urlcolor=blue, hypertexnames=false}
\newtheorem{theorem}{Theorem}[section]
\newtheorem{lemma}[theorem]{Lemma}

\newtheorem{definition}[theorem]{Definition}
\newtheorem{assumption}[theorem]{Assumption}
\newtheorem{remark}[theorem]{Remark}
\crefname{assumption}{Assumption}{Assumptions}
\Crefname{assumption}{Assumption}{Assumptions}
\crefname{theorem}{Theorem}{Theorems}
\Crefname{theorem}{Theorem}{Theorems}
\crefname{lemma}{Lemma}{Lemmas}
\Crefname{lemma}{Lemma}{Lemmas}
\crefname{remark}{Remark}{Remarks}
\Crefname{remark}{Remark}{Remarks}

\crefalias{lemma}{lemma}

\newcommand{\ket}[1]{\left|#1\right\rangle}
\newcommand{\bra}[1]{\left\langle #1\right|}

\newcommand{\norm}[1]{\left\lVert #1\right\rVert}

\newcommand{\Oh}{\mathcal{O}}
\newcommand{\Oht}{\widetilde{\mathcal{O}}}

\newcommand{\Ndof}{\mathcal N}

\newif\ifshowrevisions
\showrevisionstrue

\newcommand{\maybeincludegraphics}[2][]{%
  \IfFileExists{#2}{\includegraphics[#1]{#2}}{%
    \fbox{\begin{minipage}[c][0.18\textheight][c]{0.85\linewidth}
    \centering
    \textbf{Missing figure file}\par
    {\ttfamily\detokenize{#2}}\par\medskip
    Replace this placeholder by the original figure when compiling the final version.
    \end{minipage}}%
  }%
}

\title{Explicit block-encodings for biharmonic boundary-value problems}

\author[1,2,3]{Chuwen Ma}
\email{cwma@math.ecnu.edu.cn}
\author[4]{Zihao Tang}
\email{momo\_0@sjtu.edu.cn}

\affil[1]{School of Mathematical Sciences,  East China Normal University, Shanghai 200241, China}
\affil[2]{Key Laboratory of MEA, Ministry of Education, East China Normal University, Shanghai 200241, China}
\affil[3]{Shanghai Key Laboratory of PMMP, East China Normal University, Shanghai 200241, China}
\affil[4]{School of Mathematical Sciences, Shanghai Jiao Tong University, Shanghai 200240, China}
\date{June 8, 2026}
\keywords{Biharmonic equation, quantum singular value transformation, block-encoding, quantum circuits, boundary conditions}

\begin{document}
\maketitle

\begin{abstract}
The biharmonic equation is a prototypical fourth-order partial differential equation whose high-dimensional discretization suffers from rapidly growing degrees of freedom and severe ill-conditioning. We develop QSVT--VTAA quantum linear-system algorithms by constructing explicit block-encodings tailored to periodic, simply supported, and Dirichlet--Neumann boundary conditions. For periodic and simply supported problems, Fourier and sine-transform diagonalizations yield augmented Poisson systems with the condition-number scaling of a second-order operator. For Dirichlet--Neumann problems, we introduce a second-order boundary-corrected finite-difference discretization, establish mesh-independent stability, and construct an explicit block-encoding of the resulting nonsymmetric matrix. We also formulate a coupled-Laplace system with additional boundary unknowns and characterize its complexity in terms of the condition number of the complete augmented matrix. 
The analysis covers discretization error, block-encoding normalization, gate complexity, and solution extraction under an amplitude-input and quantum-state-output model. Numerical experiments validate the proposed discretizations and the corresponding linear solves.
\end{abstract}

\bigskip

\noindent\textbf{Keywords:} Biharmonic equation; explicit block-encoding; quantum algorithms for PDEs; boundary-value problems

\section{Introduction}

Partial differential equations (PDEs) are fundamental mathematical models in
scientific and engineering computing. Among them, the biharmonic equation
\begin{equation}\label{eq:biharmonic-general}
	\Delta^2 u = f
\end{equation}
is a canonical fourth-order elliptic equation. It appears in a wide range of
applications, including continuum mechanics, fluid dynamics, and materials
science \cite{meleshko2003selected,martinodbiharmonic}. In the Kirchhoff--Love
theory of thin plates, the unknown \(u\) represents the transverse displacement
of an elastic plate, and the biharmonic operator describes bending effects,
moments, and stress distributions \cite{ziegler2012mechanics,eschenauer2012applied}.
In low-Reynolds-number fluid dynamics, the stream-function formulation of
creeping flows also leads to biharmonic equations, which are used in the study
of Stokes flows, lubrication, and microfluidic models
\cite{anderson1995computational}. It also appears in CAD \cite{Crane2013} and image processing \cite{Press2007}. Compared with second-order elliptic equations
such as the Poisson equation, the biharmonic equation contains higher-order
spatial derivatives and requires two boundary conditions on each boundary
component. This makes both its mathematical analysis and numerical
discretization substantially more delicate.

Classical approaches include finite-difference, finite-element, spectral, and
mixed formulations. Conforming finite elements generally require
\(C^1\)-continuous basis functions
\cite{Argyris_Fried_Scharpf_1968,bellelement,Bogner1965TheGO},
whereas mixed formulations reduce the regularity requirement by introducing
additional variables and enlarged systems \cite{CIARLET1974125}.
For finite-difference methods, the treatment of boundary conditions often
requires ghost values or special boundary closures. These modifications may
destroy the tensor-product structure of the interior operator and can produce
nonsymmetric matrices
\cite{fdbiharminic,bjorstad1983fast,STEPHENSON198465}.
The computational cost becomes particularly significant in high dimensions:
with \(N\) grid points in each coordinate direction, a tensor-product
discretization contains \(N^d\) unknowns.


Quantum linear-system algorithms provide a framework for preparing quantum
states proportional to the solutions of large discretized systems. Starting
from the HHL algorithm~\cite{Harrow_2009}, subsequent developments based on
linear combinations of unitaries and quantum singular value transformation
(QSVT) have substantially improved the dependence on the target accuracy and
the condition number~\cite{Childs2015Quantum,Gilyen2019}.
Schr\"odingerization provides a complementary dynamical approach to linear
algebraic problems and has been applied to quantum implementations of
classical iterative methods and to multilevel preconditioning for PDE
discretizations~\cite{JinLiu2024,JinLiuMaYu2026}.

In this work, we employ a QSVT linear-system solver combined with
variable-time amplitude amplification (VTAA), whose dependence on the
condition number is nearly linear up to logarithmic factors. For a
discretized PDE, however, the resulting complexity depends not only on the
condition number, but also on the normalization factor and implementation
cost of the block-encoding. It is therefore essential to exploit the
structure of the discrete operator and construct block-encodings with
controlled normalization and gate cost.

The present work focuses on linear-system-based quantum algorithms for the
biharmonic equation. After discretization, the stationary biharmonic problem
naturally becomes a linear system. A general QSVT linear-system solver can in
principle be applied once a suitable block-encoding of the system matrix is
available. However, for specific  PDEs, the construction of such a
block-encoding is often the main nontrivial step. This issue is particularly
important for the biharmonic equation. The fourth-order operator, the presence
of different boundary conditions, and the boundary correction terms appearing
in finite difference discretizations lead to structured matrices that are not
always simple sparse Laplacians. In some cases, the matrices may be
nonsymmetric after the incorporation of the numerical boundary conditions. Therefore, it is not sufficient to
only invoke an abstract QSVT solver; one must construct the required
block-encodings and quantum circuits explicitly.


For the biharmonic equation, the form of this block-encoding depends strongly
on the boundary conditions. Under periodic or simply supported boundary
conditions, introducing
$ v=-\Delta u $
leads to an augmented system involving two second-order operators. Fourier or
sine-transform diagonalization can then be used to avoid directly inverting
the squared Laplacian. Dirichlet--Neumann conditions require a different
treatment. The boundary values of \(v\) are not prescribed, and a consistent
boundary closure for the fourth-order operator changes the interior
tensor-product matrix and may make the resulting discretization
nonsymmetric. Alternatively, the unknown boundary values of \(v\) can be
included as additional variables in a coupled-Laplace formulation.

\noindent\textbf{Relation to prior work.}
The present work builds on QSVT-based quantum linear-system algorithms
\cite{Childs2015Quantum,Gilyen2019} and their combination with variable-time
amplitude amplification, as well as on recent explicit block-encodings for
elliptic boundary-value problems \cite{kharazi2025explicit}. Our focus is the
structured linear algebra produced by biharmonic discretizations. In
particular, we construct boundary-condition-dependent block-encodings for
augmented Poisson operators, nonsymmetric boundary-corrected fourth-order
matrices, and coupled interior--boundary systems. The normalization factors and
implementation costs of these block-encodings are then incorporated into the
QSVT--VTAA gate-complexity analysis.

\paragraph{Main contributions.}
The main results are as follows.
\begin{itemize}[leftmargin=1.7em]

\item For periodic boundary conditions, we use a Fourier spectral
discretization on the mean-zero subspace and represent the biharmonic problem
by an augmented Poisson matrix \(P_N\). We construct an explicit diagonal
block-encoding of \(P_N\). The augmented formulation has condition number
\(O(dN^2)\), compared with \(O(d^2N^4)\) for direct inversion of the squared
Fourier Laplacian.

\item For simply supported boundary conditions, we diagonalize the
finite-difference Laplacian by the quantum discrete sine transform and
construct an explicit block-encoding of the corresponding augmented matrix
\(D_N\). Its condition number scales as \(O(N^2)\).

\item For Dirichlet--Neumann boundary conditions, including the clamped case,
we introduce a second-order one-sided boundary closure for the fourth-order
operator. We establish a mesh-independent stability estimate through the
symmetric part of the resulting nonsymmetric matrix and construct an exact,
norm-matched block-encoding of the boundary correction. The condition number
of the complete corrected matrix is \(O(N^4)\).

\item We also formulate a coupled-Laplace system in which the unknown boundary
values of \(v=-\Delta u\) are stored in a separate boundary register. The
interior--boundary operators are block-encoded using partial-isometry
constructions and assembled into a block-encoding of the complete augmented
matrix \(K\). The corresponding quantum complexity is expressed in terms of
\(\kappa(K)\); a two-dimensional numerical experiment indicates the empirical
scaling \(\kappa(K)\approx N^3\).

\end{itemize}

For each formulation, we analyze the block-encoding normalization and
implementation cost, the resulting QSVT--VTAA gate complexity, and the cost of
extracting the physical solution component under an amplitude-input and
quantum-state-output model. Discretization errors are treated separately from
the quantum state-preparation errors. Small-scale numerical experiments verify
the proposed discretizations and the corresponding block-encoded linear
systems. The principal finite-dimensional matrices and fixed-grid complexity
bounds are summarized in \Cref{tab:main-results}.


\begin{table*}[t]
\begingroup
\scriptsize
\centering
\resizebox{\textwidth}{!}{%
\begin{tabular}{@{}lllll@{}}
\toprule
Boundary condition & Discretization & Block-encoded matrix & Classical complexity & Fixed-grid gate complexity \\
\midrule
Periodic & Fourier spectral & \(P_N\), Eq.~\eqref{eq:expanded-system} & \(\Oht(N^d)\) & \(\Oht(d^2N^2\sqrt{1+\rho_N^2})\) \\
Simply supported & FD + DST & \(D_N\), Eq.~\eqref{eq:dst-system} & \(\Oht(N^d)\) & \(\Oht(dN^2\sqrt{1+\rho_N^2})\) \\
Dirichlet--Neumann & Corrected FD & \(\widetilde A\), Eq.~\eqref{eq:dn-system} & \(\Oht(N^{d + 4})\) & \(\Oht(dN^4)\) \\
Dirichlet--Neumann & Coupled Laplace & \(K\), Eq.~\eqref{eq:split-linear} & \(\Oht(N^{d + 3})\) &  \(\Oht(dN^3\sqrt{1+\rho_N^2})\) \\
\bottomrule
\end{tabular}}
\caption{Roadmap of the finite-dimensional matrices used in the QSVT-VTAA solver.  Here \(N\) is the one-dimensional resolution and \(\Ndof=N^d\) is the total number of scalar grid unknowns.  For the augmented Poisson formulations, \(\rho_N\) denotes the ratio between the auxiliary component and the solution component; its behavior in smooth and worst-case regimes is discussed after the corresponding theorems. For the linear systems derived from the last two difference schemes, we adopt the complexity of the GMRES method because the corresponding differential matrices are non-symmetric. Notably, the condition number of the difference matrix $K$ from the coupled Laplace system is estimated via numerical trials, while those of the other difference matrices enjoy strict theoretical guarantees. }
\label{tab:main-results}
\endgroup
\end{table*}

\paragraph{Organization.}
\Cref{sec:prelim} reviews block-encodings, the LCU construction, the
QSVT--VTAA linear-system solver, and block-encodings of discrete Laplace
operators. \Cref{sec:periodic} treats periodic boundary conditions using a
Fourier spectral discretization, and \Cref{sec:ss} considers simply supported
boundary conditions using finite differences and the quantum discrete sine
transform. \Cref{sec:dn} develops the direct boundary-corrected discretization
for Dirichlet--Neumann data, while \Cref{sec:coupled} presents the
coupled-Laplace formulation. Numerical results are reported in
\Cref{sec:numerics}, followed by conclusions in \Cref{sec:conclusion}.

\paragraph{Notation.}
We use \(\Oh(\cdot)\) for asymptotic upper bounds and \(\Oht(\cdot)\) when logarithmic factors are suppressed.  The spatial dimension is denoted by \(d\).  Throughout the PDE sections, \(N\) denotes the one-dimensional resolution: in periodic spectral discretizations it is the number of collocation points per coordinate, while in finite-difference sections it is the number of interior grid points per coordinate unless stated otherwise.  The total number of scalar grid unknowns is denoted by \(\Ndof=N^d\). 
The mesh size is \(h\).
Continuous functions are written in italic, for example \(u\), \(v\), and \(f\).  Discrete grid vectors are written in bold, for example \(\mathbf u\), \(\mathbf v\), and \(\mathbf f\); their Fourier or sine coefficients are denoted by hats, for example \(\widehat{\mathbf u}\).  
For a matrix \(A\), \(\norm{A}\) denotes its spectral norm, and \(\kappa(A)\) its condition number on the relevant nonsingular subspace.  The normalized quantum state associated with a nonzero discrete vector \(\mathbf b\) is denoted by \(\ket{\mathbf b}=\mathbf b/\norm{\mathbf b}\). In the end-to-end estimates, the discretization and quantum-state errors are each
chosen to be of order $\varepsilon$, so that the total error remains $O(\varepsilon)$.

\begin{assumption}[Input model and output model]\label{ass:input-output}
The normalized quantum states associated with the discrete
right-hand side and boundary data are assumed to be efficiently
preparable. Their preparation costs are not included in the
complexity estimates below. This assumption is intended for data
with a succinct and efficiently evaluable classical representation,
as is the case for many smooth structured inputs. The algorithms
output normalized solution states or normalized augmented states.
\end{assumption}

\begin{assumption}[Null spaces]\label{ass:nullspace}
When the continuous or discrete operator has a null space, we restrict the problem to the orthogonal complement of the null space or solve the corresponding pseudoinverse problem.  For periodic boundary conditions, this means that the right-hand side has zero mean and the solution is chosen with zero mean.
\end{assumption}


\section{Preliminaries}\label{sec:prelim}

This section recalls the quantum linear-algebra tools used in this paper. We
only review the ingredients needed for the later construction: block-encodings,
the linear combination of unitaries (LCU) framework, quantum singular value
transformation (QSVT) for linear systems, QSVT method combined with variable time amplitude amplification (VTAA) and an explicit block-encoding of the
finite-difference Laplacian. For more complete introductions to these tools, we
refer to \cite{Gilyen2019,Low_2017,Chakraborty_2023,lin2022lecturenotesquantumalgorithms}.

\subsection{Block-encodings and LCU}
\label{subsec:lcu}
Block-encoding is a standard way to represent a non-unitary matrix as a
sub-block of a larger unitary matrix. This framework is widely used in quantum
linear algebra, Hamiltonian simulation, and QSVT-based algorithms
.

\begin{definition}[Block-encoding]\label{def:block-encoding}
	Let \(A\in\mathbb C^{M\times M}\), \(M=2^n\), and let
	\(\alpha>0\), \(a\in\mathbb N\), \(\varepsilon\ge 0\). An \((n+a)\)-qubit
	unitary \(U_A\) is called an \((\alpha,a,\varepsilon)\)-block-encoding of
	\(A\) if
	\begin{equation}\label{eq:block-encoding}
		\left\|
		A-\alpha(\bra{0^a}\otimes I_M)U_A(\ket{0^a}\otimes I_M)
		\right\|
		\le \varepsilon .
	\end{equation}
	When \(\varepsilon=0\), the block-encoding is called exact.
\end{definition}

The normalization factor \(\alpha\) plays an important role in QSVT-based
linear-system algorithms. A smaller ratio \(\alpha/\|A\|\) generally leads to
a more efficient algorithm.

We next recall the LCU construction \cite{Gilyen2019}. Let
\begin{equation}
	T=\sum_{j=0}^{K-1}\alpha_j U_j,\qquad \alpha_j>0,
\end{equation}
where each \(U_j\) is unitary. Any complex phase in a coefficient can be
absorbed into the corresponding unitary \(U_j\). Let
\[
\alpha_\Sigma=\sum_{j=0}^{K-1}\alpha_j .
\]
Assume \(K\le 2^a\), after padding by zero coefficients if necessary. Define
the selection oracle
\begin{equation}
	\mathrm{SELECT}(U)
	=
	\sum_{j=0}^{K-1}\ket{j}\bra{j}\otimes U_j,
\end{equation}
and the state-preparation oracle
\begin{equation}\label{eq:prep LCU}
	\mathrm{PREP}\ket{0^a}
	=
	\frac{1}{\sqrt{\alpha_\Sigma}}
	\sum_{j=0}^{K-1}\sqrt{\alpha_j}\ket{j}.
\end{equation}
Then
\begin{equation}\label{eq:lcu-unitary}
	U_T
	=
	(\mathrm{PREP}^{\dagger}\otimes I)\,
	\mathrm{SELECT}(U)\,
	(\mathrm{PREP}\otimes I)
\end{equation}
is an \((\alpha_\Sigma,a,0)\)-block-encoding of \(T\), since
\begin{equation}
	(\bra{0^a}\otimes I)U_T(\ket{0^a}\otimes I)
	=
	\frac{T}{\alpha_\Sigma}.
\end{equation}

The corresponding circuit layout is shown in \cref{fig:lcu}.

In the LCU constructions, the coefficient-state preparation oracle
\(\mathrm{PREP}\) is implemented using standard state-preparation routines.
For a general coefficient vector \((\alpha_0,\ldots,\alpha_{K-1})\), this
amounts to preparing \eqref{eq:prep LCU}.
A generic \(K\)-dimensional coefficient state can be prepared by the
tree-based amplitude-encoding procedure \cite{m_tt_nen_2004}. In the concrete
LCU constructions used in this paper, these coefficient registers are either
constant-size, logarithmic-size, or uniform superpositions over spatial
directions, and their preparation cost is included in the stated
block-encoding costs.

\begin{figure}[h]
	\centering
	\maybeincludegraphics[width=0.78\linewidth]{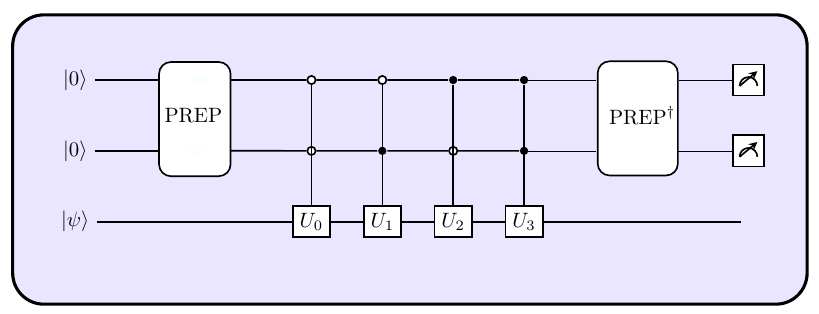}
	\caption{LCU construction with \(\mathrm{PREP}\), \(\mathrm{SELECT}\), and
		\(\mathrm{PREP}^{\dagger}\).}
	\label{fig:lcu}
\end{figure}

We will repeatedly use the following standard arithmetic rules for
block-encodings \cite{Gilyen2019,lin2022lecturenotesquantumalgorithms}.

\begin{lemma}[Arithmetic of block-encodings]\label{lem:block-arithmetic}
	Suppose \(U_A\) is an \((\alpha,a,\varepsilon_A)\)-block-encoding of \(A\)
	with cost \(T_A\), and \(U_B\) is a \((\beta,b,\varepsilon_B)\)-block-encoding
	of \(B\) with cost \(T_B\). Then:
	\begin{enumerate}[leftmargin=2em]
		\item \(A+B\) admits a block-encoding with normalization factor
		\(\alpha+\beta\), using LCU. The gate cost is
		\(\mathcal O(T_A+T_B)\), up to the cost of the LCU state preparation and
		the necessary ancilla padding.
		
		\item \(AB\) admits a block-encoding with normalization factor
		\(\alpha\beta\), obtained by composing the two block-encodings. The gate
		cost is \(\mathcal O(T_A+T_B)\). In the approximate case, the error is at
		most
		\[
		\alpha\varepsilon_B+\beta\varepsilon_A+\varepsilon_A\varepsilon_B .
		\]
	\end{enumerate}
\end{lemma}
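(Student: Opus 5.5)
For the sum in part 1, I would apply the LCU construction of \cref{subsec:lcu} directly to the two block-encodings rather than to unitaries. Let $\Pi_a=\ket{0^a}\otimes I$ and write $U_A$, $U_B$ on a common ancilla register of size $\max(a,b)$, padding the smaller one by identity on the idle qubits so that the top-left block is unchanged. Add one selection qubit and form
\[
\mathrm{SELECT}=\ket{0}\bra{0}\otimes U_A+\ket{1}\bra{1}\otimes U_B,\qquad
\mathrm{PREP}\ket{0}=\tfrac{1}{\sqrt{\alpha+\beta}}\bigl(\sqrt{\alpha}\ket{0}+\sqrt{\beta}\ket{1}\bigr).
\]
Projecting all ancillas (the selection qubit together with the padded register) onto $\ket{0}$ gives $\frac{\alpha}{\alpha+\beta}\,\Pi^\dagger U_A\Pi+\frac{\beta}{\alpha+\beta}\,\Pi^\dagger U_B\Pi$. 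Multiplying by $\alpha+\beta$ recovers $A+B$ up to error $\varepsilon_A+\varepsilon_B$, by the triangle inequality. On cost, the controlled versions of $U_A$ and $U_B$ cost $\mathcal O(T_A)$ and $\mathcal O(T_B)$, PREP is a single rotation, and the padding contributes the ancilla overhead mentioned in the statement.

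For the product in part 2, I would place $U_A$ and $U_B$ on disjoint ancilla registers of $a$ and $b$ qubits and take $U=(I_b\otimes U_A)(I_a\otimes U_B)$, with each unitary acting on its own ancilla and on the shared system register. Because the two registers are disjoint, projecting both onto $\ket{0}$ factorizes:
\[
(\bra{0^{a+b}}\otimes I)\,U\,(\ket{0^{a+b}}\otimes I)=\tilde A\,\tilde B,
\]
where $\tilde A=\Pi_a^\dagger U_A\Pi_a$ and $\tilde B=\Pi_b^\dagger U_B\Pi_b$. This factorization holds since $U_B$ leaves the $a$-register in $\ket{0^a}$, and $U_A$ acts trivially on the $b$-register, so that the projection onto $\ket{0^b}$ commutes past $U_A$. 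This is the one step where I would write out the index bookkeeping carefully; everything else is immediate. The gate cost is $T_A+T_B$.

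For the error in the approximate case, set $E_A=A-\alpha\tilde A$ and $E_B=B-\beta\tilde B$. Then
\[
AB-\alpha\beta\tilde A\tilde B=E_A B+\alpha\tilde A E_B .
\]
Using $\|\tilde A\|\le 1$ gives the bound $\|E_A\|\,\|B\|+\alpha\varepsilon_B$. To bring this to the stated form I would instead write $B=\beta\tilde B+E_B$, so that $\|B\|\le\beta+\varepsilon_B$. Then $\|E_AB\|\le\varepsilon_A(\beta+\varepsilon_B)$, and the total is $\alpha\varepsilon_B+\beta\varepsilon_A+\varepsilon_A\varepsilon_B$, as claimed.
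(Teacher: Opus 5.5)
Your proposal is correct, and it is essentially the standard argument the paper relies on: the paper states this lemma without proof and defers to Gily\'en et al.\ and Lin's lecture notes, whose constructions are an LCU with a one-qubit PREP and a controlled SELECT over identity-padded $U_A,U_B$ for the sum, and sequential composition on disjoint ancilla registers for the product, exactly as you propose. Your error bookkeeping, $AB-\alpha\beta\tilde A\tilde B=E_AB+\alpha\tilde A E_B$ together with $\norm{B}\le\beta+\varepsilon_B$, yields exactly the stated bound, including the cross term $\varepsilon_A\varepsilon_B$.
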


\subsection{QSVT-VTAA linear-system solver}
\label{subsec:qsvt-ls}
\label{subsec:qsvt-vtaa}

Quantum signal processing (QSP) provides the polynomial-transformation
primitive underlying the linear-system solver used in this paper.
Given the signal operator
\[
O(x)=
\begin{pmatrix}
	x & -\sqrt{1-x^2}\\
	\sqrt{1-x^2} & x
\end{pmatrix},
\qquad x\in[-1,1],
\]
and a phase sequence
\(\Phi=(\varphi_0,\ldots,\varphi_L)\), the QSP sequence
\[
U_{\Phi}(x)
=
e^{\mathrm{i}\varphi_0 Z}
\prod_{j=1}^{L}
\left[
O(x)e^{\mathrm{i}\varphi_j Z}
\right]
\]
implements polynomial entries in \(x\). Conversely, every feasible
polynomial bounded by one on \([-1,1]\), with the required parity,
can be implemented by an appropriate phase sequence
\cite{lin2025mathematicalnumericalanalysisquantum}.
Here \(L\) denotes the polynomial degree.

Quantum singular value transformation (QSVT) lifts this scalar
polynomial transformation to the singular values of a block-encoded
matrix. Let
\[
A=W\Sigma V^\dagger
\]
be a singular-value decomposition. QSVT implements polynomial
transformations of the form
\[
p^\diamond(A)=Wp(\Sigma)V^\dagger.
\]
For a linear system \(A\mathbf x=\mathbf b\), one chooses a bounded
polynomial approximation to a suitably scaled reciprocal function
on the nonzero singular-value interval. With the appropriate adjoint
orientation, QSVT implements
\[
p^\diamond(A^\dagger)
=
Vp(\Sigma)W^\dagger
\approx cA^{-1},
\]
where \(c>0\) is a scaling factor ensuring that the polynomial remains
bounded by one. Conditioned on the success ancillas, the resulting
state is therefore proportional to \(A^{-1}\ket{\mathbf b}\).

The basic QSVT circuit is shown in \cref{fig:qsvt}. It alternates
applications of the block-encoding \(U_A\), its inverse, and controlled
phase rotations
\[
CR_{\varphi}
=
\ket{0^a}\bra{0^a}\otimes e^{-\mathrm{i}\varphi Z}
+
\left(I-\ket{0^a}\bra{0^a}\right)\otimes I .
\]
The precise sign convention for these rotations is absorbed into the
phase sequence \(\Phi\).

\begin{figure}[h]
	\centering
	\maybeincludegraphics[width=1\linewidth]{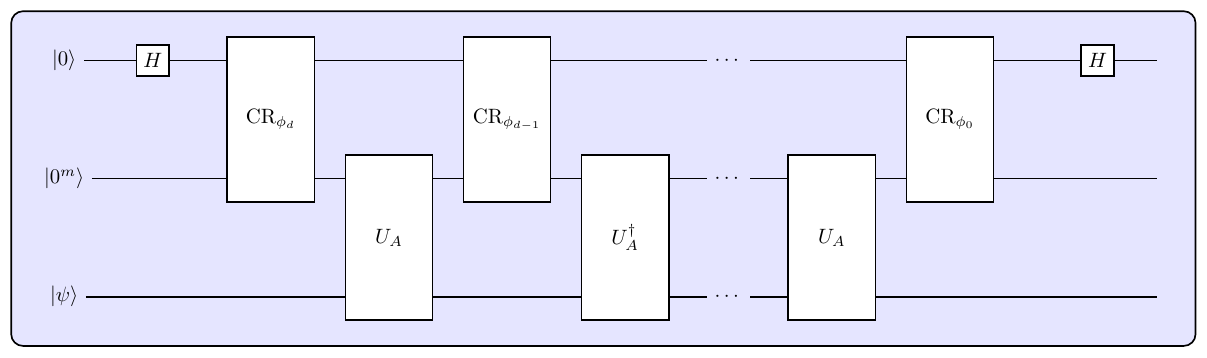}
	\caption{Basic QSVT circuit for applying a polynomial
		transformation to the singular values of a block-encoded matrix.
		When the polynomial approximates a scaled reciprocal function,
		the success component is proportional to
		\(A^{-1}\ket{\psi}\).}
	\label{fig:qsvt}
\end{figure}

A fixed-time QSVT inversion treats all singular-value components at
the worst inversion scale. Although the reciprocal polynomial has
degree nearly linear in the condition number, uniform amplitude
amplification can introduce a second worst-case factor in
\(\kappa\). To avoid this quadratic dependence, all algorithms below
use the QSVT-based variable-time amplitude-amplification method of
\cite{Chakraborty_2023}.

To explain the construction, write
\[
A=\sum_i \sigma_i\ket{w_i}\bra{v_i},
\qquad
\ket{\mathbf b}
=
\sum_i c_i\ket{w_i}.
\]
The desired inverse state is proportional to
\[
A^{-1}\ket{\mathbf b}
=
\sum_i \frac{c_i}{\sigma_i}\ket{v_i}.
\]
The QSVT-VTAA procedure treats the different singular-value scales
through the following variable-stopping-time process.

\begin{enumerate}[leftmargin=2em]
	\item
	\textbf{Dyadic singular-value scales.}
	The singular-value interval is divided into
	\(m=\lceil\log_2\kappa\rceil+1\) dyadic scales with thresholds
	\[
	\gamma_j=2^{-j},
	\qquad j=1,\ldots,m.
	\]
	
	\item
	\textbf{Singular-value discrimination.}
	At stage \(j\), a QSVT-based singular-value discrimination
	procedure tests the unresolved singular components against a
	threshold of order \(\gamma_j\). The result is stored in a clock
	or flag register.
	
	\item
	\textbf{Scale-dependent inversion and stopping.}
	Components assigned to the \(j\)-th scale undergo a localized
	QSVT inversion
	\[
	W(\gamma_j,\varepsilon'),
	\]
	whose query cost is
	\[
	\widetilde O\!\left(\gamma_j^{-1}\right).
	\]
	These branches are then halted, while the unresolved components
	proceed to the next scale. Thus a component with singular value
	\(\sigma_i\) stops at a scale satisfying
	\(\gamma_j=\Theta(\sigma_i)\), up to the transition width of the
	discrimination polynomial.
	
	\item
	\textbf{Variable-time amplitude amplification.}
	VTAA amplifies the success component of the complete
	variable-stopping-time algorithm while accounting for the
	different stopping times of the spectral branches. This avoids
	applying the longest inversion circuit uniformly to every
	component and yields near-linear worst-case dependence on
	\(\kappa\).
\end{enumerate}

The use of QSVT-based singular-value discrimination, rather than
explicit phase estimation, also allows the method to be applied
directly to non-Hermitian block-encoded matrices. This will be
important for the boundary-corrected systems considered below.

\begin{lemma}[QSVT-VTAA linear-system cost
	{\cite{Chakraborty_2023}}]
	\label{thm:qsvt-vtaa}
	Let \(A\) be invertible on a subspace \(\mathcal S\), and suppose
	that the normalized right-hand-side state
	\(\ket{\mathbf b}\) belongs to \(\mathcal S\).
	Assume that \(U_A\) is an exact
	\((\alpha_A,a,0)\)-block-encoding of \(A\), implemented with gate
	cost \(T_A\). Define
	\[
	\beta_A:=\frac{\alpha_A}{\norm{A}},
	\qquad
	\kappa(A):=\norm{A}\norm{A^{-1}},
	\]
	where the inverse and the norms are restricted to
	\(\mathcal S\).
	
	Under Assumption~\ref{ass:input-output}, the QSVT-VTAA
	linear-system algorithm prepares a state that is
	\(\varepsilon\)-close to
	\[
	\frac{A^{-1}\ket{\mathbf b}}
	{\norm{A^{-1}\ket{\mathbf b}}}
	\]
	with gate complexity
	\begin{align*}
	&O\!\left(
	\beta_A\kappa(A)\log\kappa(A)
	\log\frac{\kappa(A)}{\varepsilon}\,T_A
	\right)\\
	=&
	\Oht\!\left(
	\beta_A\kappa(A)T_A\log\frac{1}{\varepsilon}
	\right).
	\end{align*}
	The algorithm uses \(O(\log\kappa(A))\) additional qubits.
\end{lemma}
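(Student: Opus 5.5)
The lemma is essentially a restatement of the main theorem of \cite{Chakraborty_2023}, so the plan is to reduce it to that result by normalizing the block-encoding and tracking how $\alpha_A$ enters the effective condition number.

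\textbf{Normalization.} The unitary $U_A$ is an exact block-encoding of $\widehat A:=A/\alpha_A$. On $\mathcal S$ the singular values of $\widehat A$ lie in
\[
[\norm{A^{-1}}^{-1}/\alpha_A,\ \norm{A}/\alpha_A]\subseteq[1/\kappa_{\mathrm{eff}},\,1],
\qquad
\kappa_{\mathrm{eff}}:=\alpha_A\norm{A^{-1}}=\beta_A\kappa(A).
\]
Because $\widehat A^{-1}\ket{\mathbf b}$ is proportional to $A^{-1}\ket{\mathbf b}$, the target normalized state is unchanged. The whole algorithm can therefore run on $\widehat A$, with $\kappa_{\mathrm{eff}}$ in place of the condition number.

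\textbf{Stage costs.} Following the four stages above with $m=\lceil\log_2\kappa_{\mathrm{eff}}\rceil+1$ dyadic scales, the relevant ingredients are two QSVT polynomials. The first is a sign or threshold polynomial for singular-value discrimination at scale $\gamma_j$. The second is a reciprocal-type polynomial on $[\gamma_j,1]$. Both have degree $O(\gamma_j^{-1}\log(1/\varepsilon'))$. Each degree unit costs one use of $U_A$ or $U_A^\dagger$ plus $O(a)$ gates for the controlled rotations $CR_\varphi$, so stage $j$ costs $O(\gamma_j^{-1}\log(1/\varepsilon')\,T_A)$. The maximal stopping time is then $T_{\max}=O(\kappa_{\mathrm{eff}}\log(1/\varepsilon')T_A)$.

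\textbf{VTAA and error budget.} Next I would verify the hypotheses of Ambainis-style VTAA as formulated in \cite{Chakraborty_2023}. The key quantity is the $\ell_2$-averaged stopping time weighted by the success amplitudes $|c_i|/\sigma_i$. This average is bounded by $O(\kappa_{\mathrm{eff}}\log(1/\varepsilon'))$ times the success probability factor, which is what yields linear rather than quadratic dependence. The VTAA overhead contributes an extra $\log\kappa_{\mathrm{eff}}$ factor. Setting $\varepsilon'=\varepsilon/\mathrm{poly}(m)$ so that errors accumulated over the $m$ stages stay below $\varepsilon$ gives the $\log(\kappa/\varepsilon)$ factor. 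The flag/clock register costs $O(\log\kappa)$ qubits.

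\textbf{Substitution and main obstacle.} Finally, substituting $\kappa_{\mathrm{eff}}=\beta_A\kappa(A)$ gives the stated bound. Here $\log\kappa_{\mathrm{eff}}$ is replaced by $\log\kappa(A)$, which holds up to $\log\beta_A$ terms absorbed in $\Oht$. The main obstacle is the VTAA averaging argument, specifically controlling leakage across the fuzzy discrimination thresholds for non-Hermitian $A$, where the left and right singular vectors differ. I would not reprove this but cite it directly, checking only that the restriction to $\mathcal S$ is compatible. For that, $\ket{\mathbf b}$ must lie in the span of the left singular vectors with nonzero singular values, so that components with small or zero $\sigma_i$ never appear.
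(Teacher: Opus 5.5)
\paragraph{Comparison with the paper's argument.} Your overall route is the paper's route. The paper gives no argument of its own and imports the bound from Chakraborty et al., so the only real work is the normalization bookkeeping, and that is where your proposal falls short of the statement.

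\paragraph{The normalization loses the sharp logarithms.} You view \(U_A\) as an encoding of \(\widehat A=A/\alpha_A\) with singular values in \([1/\kappa_{\mathrm{eff}},1]\) and run \(\lceil\log_2(\beta_A\kappa(A))\rceil+1\) dyadic stages. As a result, the ratio of longest to shortest stopping time, the VTAA logarithm, and the clock register are all governed by \(\beta_A\kappa(A)\) rather than \(\kappa(A)\). A direct application of the cited theorem with \(\kappa_{\mathrm{eff}}\) gives \(O(\beta_A\kappa\log(\beta_A\kappa)\log(\beta_A\kappa/\varepsilon)T_A)\). This implies the \(\Oht\) form. It implies the first displayed \(O\)-bound only when \(\beta_A=\kappa^{O(1)}\), which holds in every application here since \(\beta_A=O(1)\), but not in general.

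The fix is to normalize by \(\norm{A}\) instead of \(\alpha_A\). Then \(U_A\) is an exact \((\beta_A,a,0)\)-block-encoding of \(A/\norm{A}\), whose restricted singular values lie in \([1/\kappa,1]\). This is the form in which the cited theorem is stated, and the subnormalization \(\beta_A\) enters only linearly, through the polynomial degrees. Equivalently, for \(\widehat A\) the thresholds need only cover \([1/(\beta_A\kappa),1/\beta_A]\). That gives the paper's \(\lceil\log_2\kappa\rceil+1\) stages, at the price of needing a constant-factor estimate of \(\norm{A}\), which is available for every matrix in the paper.

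\paragraph{The error budget.} Setting \(\varepsilon'=\varepsilon/\mathrm{poly}(m)\) gives \(\log(1/\varepsilon')=O(\log\log\kappa+\log(1/\varepsilon))\), not a \(\log(\kappa/\varepsilon)\) factor. More importantly, keeping the accumulated absolute error below \(\varepsilon\) before amplification does not give error \(\varepsilon\) after it. Amplitude amplification rescales such errors by \(1/\sqrt{p_{\mathrm{succ}}}\), which is \(\Theta(\kappa)\) when \(\ket{\mathbf b}\) lies along the top singular directions. This is why the per-stage precision is taken polynomially small in \(1/\kappa\), and it is the source of the \(\kappa\) inside the logarithm.

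\paragraph{The compatibility condition on \(\mathcal S\).} The condition you name is necessary but not sufficient. QSVT returns a state proportional to \(A^{+}\ket{\mathbf b}\), with cost governed by the singular values of \(A\) itself. These match \((A|_{\mathcal S})^{-1}\ket{\mathbf b}\) and \(\kappa(A|_{\mathcal S})\) when \(\mathcal S\) is reducing, i.e.\ invariant under both \(A\) and \(A^\dagger\), but not in general. For example, take \(A=\begin{bmatrix}1&1\\0&0\end{bmatrix}\), \(\mathcal S=\operatorname{span}\{e_1\}\) and \(\mathbf b=e_1\). Then \(\mathbf b\) is a left singular vector with nonzero singular value and \(A|_{\mathcal S}=1\), yet \(A^{+}e_1=\tfrac12(1,1)^{\mathsf T}\). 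In the paper's periodic and simply supported applications, the mean-zero and physical-sine subspaces are reducing, so the citation does apply there.
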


\begin{remark}
	The general result of \cite{Chakraborty_2023} also allows approximate
	block-encodings, provided that their error is sufficiently small
	relative to the target state accuracy and the condition number.
	The principal block-encodings constructed below are exact at the
	logical-circuit level, so the exact form of
	Lemma \ref{thm:qsvt-vtaa} is sufficient for the present analysis.
	The cost of preparing the normalized right-hand-side state is excluded
	according to Assumption~\ref{ass:input-output}.
\end{remark}

\subsection{Block-encoding of the discrete Laplacian}
\label{subsec:laplacian-be}

We finally recall an explicit block-encoding of the one-dimensional Dirichlet
finite-difference Laplacian, which will be used as a basic subroutine in later
sections. Let
\begin{align}\label{laplace}
L_h^{(1)}
=
\frac{1}{h^2}
\begin{bmatrix}
	2 & -1 & 0 & \cdots & 0 \\
	-1 & 2 & -1 & \ddots & \vdots \\
	0 & -1 & 2 & \ddots & 0 \\
	\vdots & \ddots & \ddots & \ddots & -1 \\
	0 & \cdots & 0 & -1 & 2
\end{bmatrix}_{N\times N}.
\end{align}
Assume \(N=2^n\). Let \(S_1\) be the cyclic increment operator
\begin{align}\label{eq:plus1}
S_1\ket{j}=\ket{j+1\!\!\!\pmod N}.
\end{align}
Define the boundary reflections
\begin{align*}
R_0&=I-2\ket{0}\bra{0},\\
R_{N-1}&=I-2\ket{N-1}\bra{N-1}.
\end{align*}
Then the Dirichlet Laplacian can be written as
\begin{align}\label{eq:lap-shift-form}
L_h^{(1)}
&=\frac{1}{h^2}\Bigl(
2I-\frac12(R_0+I)S_1^{-1}\notag\\
&\hspace{2.2cm}-\frac12(R_{N-1}+I)S_1
\Bigr).
\end{align}
The cyclic increment \(S_1\) can be implemented by reversible addition
circuits, and the reflections \(R_0,R_{N-1}\), illustrated in \cref{fig:reflect}, by multi-controlled phase gates
\cite{Gidney_2018,kharazi2025explicit}. Thus Eq. \eqref{eq:lap-shift-form}
provides an LCU block-encoding of \(L_h^{(1)}\).

\begin{figure}[h]
	\centering
	\maybeincludegraphics[width=1\linewidth]{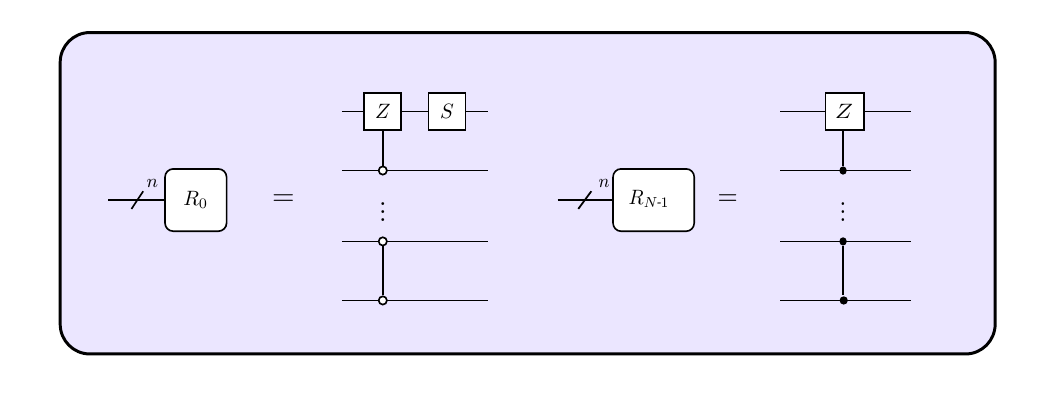}
	\caption{Boundary reflections \(R_0\) and \(R_{N-1}\) used in the
		shift-based block-encoding of the one-dimensional Dirichlet Laplacian.}
	\label{fig:reflect}
\end{figure}

\begin{lemma}[Discrete Laplacian block-encoding]\label{lem:lap-be}
	The one-dimensional Dirichlet Laplacian \(L_h^{(1)}\) admits an exact
	block-encoding with normalization
	\[
	\alpha_1=O(h^{-2})
	\]
	and gate cost \(O(n)\), up to the cost of multi-controlled gates. Moreover, the
	\(d\)-dimensional tensor-product Laplacian
	\[
	L_h^{(d)}
	=
	\sum_{\ell=1}^d
	I^{\otimes(\ell-1)}
	\otimes L_h^{(1)}
	\otimes I^{\otimes(d-\ell)}
	\]
	admits a block-encoding with normalization
	\[
	\alpha_d=O(dh^{-2})
	\]
	and gate cost \(O(dn)\), using an additional LCU register to select the spatial
	direction \(\ell\) \cite{kharazi2025explicit}.
\end{lemma}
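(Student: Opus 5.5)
The plan is to realize \eqref{eq:lap-shift-form} directly as an LCU in the sense of \Cref{subsec:lcu}, and then lift the one-dimensional encoding to $d$ dimensions with a second LCU over the spatial directions.

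\textbf{One-dimensional case.} Expand the bracket in \eqref{eq:lap-shift-form} into five unitary terms:
\[
h^2 L_h^{(1)} = 2I - \tfrac12 R_0 S_1^{-1} - \tfrac12 S_1^{-1} - \tfrac12 R_{N-1} S_1 - \tfrac12 S_1 .
\]
Each term is a product of the unitaries $S_1^{\pm1}$, $R_0$, $R_{N-1}$, and the minus signs are absorbed into the unitaries as phases. The coefficients are $(2,\tfrac12,\tfrac12,\tfrac12,\tfrac12)$, so $\alpha_\Sigma=4$. This gives normalization $\alpha_1=4h^{-2}=O(h^{-2})$ with a $3$-qubit coefficient register.

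Before building the circuit, I will confirm that \eqref{eq:lap-shift-form} really reproduces \eqref{laplace}. The factor $\tfrac12(I+R_0)=I-\ket{0}\bra{0}$ annihilates the wrap-around entry $\ket{0}\bra{N-1}$ of $S_1^{-1}$. Likewise, $\tfrac12(I+R_{N-1})$ removes the entry $\ket{N-1}\bra{0}$ of $S_1$. What remains are exactly the off-diagonal $-1$'s of \eqref{laplace}.

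\textbf{Circuit and cost.} PREP acts on a constant-size register, so it costs $O(1)$. SELECT is built from controlled versions of three pieces:
\begin{itemize}
\item the cyclic increment and decrement on $n$ qubits, which cost $O(n)$ gates by reversible adders \cite{Gidney_2018};
\item the reflections $R_0$ and $R_{N-1}$, each an $n$-qubit multi-controlled $Z$ conjugated by $X$ gates;
\item control on the $O(1)$ selector qubits, which adds only constant overhead.
\end{itemize}
The reflections are the main obstacle to a clean $O(n)$ count. A multi-controlled phase on $n$ qubits costs $O(n)$ only with ancillas, which is why the statement counts it "up to the cost of multi-controlled gates." Exactness follows because every step of the LCU identity \eqref{eq:lcu-unitary} is an algebraic equality.

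\textbf{$d$-dimensional case.} Write $L_h^{(d)}=\sum_{\ell}L_\ell$, where $L_\ell$ is $L_h^{(1)}$ acting on the $\ell$-th register. I will add a direction register prepared in the uniform superposition over $\ell\in\{1,\dots,d\}$, at cost $O(\log d)$. Controlled on $\ell$, the one-dimensional block-encoding is applied to the $\ell$-th coordinate register, with all directions sharing the same coefficient ancillas. By \Cref{lem:block-arithmetic}(1), generalized to $d$ summands, the normalization is $d\alpha_1=O(dh^{-2})$.

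A naive implementation controls each of the $d$ copies separately, giving gate cost $O(dn)$. The alternative is to swap register $\ell$ into a work register, which costs the same order. Either way the claimed $O(dn)$ bound, up to multi-controlled gates, holds.
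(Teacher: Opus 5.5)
Your route is the paper's route: read an LCU off the shift--reflection identity \eqref{eq:lap-shift-form}, which the paper asserts and attributes to \cite{kharazi2025explicit}, then add a uniform direction register. However, the one step you explicitly verify is wrong. With the convention \eqref{eq:plus1}, $S_1\ket{j}=\ket{j+1 \bmod N}$, the wrap-around entry of $S_1$ is $\ket{0}\bra{N-1}$ and that of $S_1^{-1}$ is $\ket{N-1}\bra{0}$; you have them swapped. Left multiplication by $\tfrac12(I+R_0)=I-\ket{0}\bra{0}$ deletes row $0$ of $S_1^{-1}$. That row holds the genuine superdiagonal entry $\ket{0}\bra{1}$, so the corner survives. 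Likewise $(I-\ket{N-1}\bra{N-1})S_1$ deletes $\ket{N-1}\bra{N-2}$ and keeps $\ket{0}\bra{N-1}$. Hence your five-term expansion, and \eqref{eq:lap-shift-form} read as a matrix product, does not reproduce \eqref{laplace}. It gives a nonsymmetric matrix with periodic corners. For $N=4$ its rows are $(2,0,0,-1)$, $(-1,2,-1,0)$, $(0,-1,2,-1)$, $(-1,0,0,2)$.

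The repair is immediate: put the projectors on the other side,
\[
h^2L_h^{(1)} = 2I - \tfrac12 S_1^{-1}(I+R_0) - \tfrac12 S_1(I+R_{N-1}).
\]
Equivalently, on the left pair $R_{N-1}$ with $S_1^{-1}$ and $R_0$ with $S_1$. The five terms are still signed products of $S_1^{\pm1}$, $R_0$, $R_{N-1}$ with coefficients $(2,\tfrac12,\tfrac12,\tfrac12,\tfrac12)$. So $\alpha_1=4h^{-2}$, exactness, the $O(n)$ gate count up to multi-controlled gates, and your $d$-dimensional lifting with normalization $4dh^{-2}$ all go through unchanged.
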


\begin{lemma}[Block-encoding of an augmented linear system]
\label{lem:augmented-block-encoding}
Let \(B\) be a matrix acting on \(n\) qubits, and suppose that
\(U_B\) is an exact \((\alpha_B,a,0)\)-block-encoding of \(B\)
with gate cost \(T_B\). Then the augmented matrix
\[
\mathcal P(B)
:=
\begin{bmatrix}
B & -I\\
0 & B
\end{bmatrix}
\]
admits an exact block-encoding with normalization
\[
\alpha_{\mathcal P}=\alpha_B+1
\]
and gate cost
\[
T_{\mathcal P}=T_B+O(1).
\]
\end{lemma}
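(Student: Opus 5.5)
The plan is to decompose the augmented matrix as
\[
\mathcal P(B)=I_2\otimes B+\bigl(-\ket{0}\bra{1}\bigr)\otimes I,
\]
where the first qubit is a new ``block'' register labeling the two components. I will block-encode each summand separately, with normalization \(\alpha_B\) and \(1\) respectively, and then combine them by the LCU rule of \Cref{lem:block-arithmetic}. This gives normalization exactly \(\alpha_B+1\).

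\textbf{First summand.} The unitary \(I_2\otimes U_B\) is, trivially, an exact \((\alpha_B,a,0)\)-block-encoding of \(I_2\otimes B\), since the block register is a spectator. Its cost is \(T_B\).

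\textbf{Second summand.} The operator \(-\ket{0}\bra{1}\otimes I\) is not unitary, so I will block-encode it with one extra ancilla \(c\). The natural choice is \(-X\) on the block register, which maps \(\ket{1}\mapsto-\ket{0}\), combined with a projector onto the block value \(\ket{1}\).

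Concretely, take the unitary \(V=(-X_{\mathrm{blk}})\cdot \mathrm{CNOT}_{\mathrm{blk}\to c}\cdot X_c\) applied with \(c\) initialized in \(\ket{0}\). Then \(X_c\,\mathrm{CNOT}\) leaves \(c\) in \(\ket{0}\) iff the block bit is \(1\), and \(-X_{\mathrm{blk}}\) (implemented as \(X\) together with a global phase \(-1\), i.e. \(ZXZ\)-type or a \(Z\)/phase gate) then maps \(\ket{1}\mapsto -\ket{0}\). Projecting \(c\) onto \(\ket{0}\) yields exactly \(-\ket{0}\bra{1}\otimes I\). This is an exact \((1,1,0)\)-block-encoding with \(O(1)\) gates.

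\textbf{Combining.} The two encodings use different numbers of ancillas, so I pad the second one with idle \(\ket{0^{a}}\) ancillas (or pad both to \(\max(a,1)\)). A single LCU qubit is then prepared by
\[
\mathrm{PREP}\ket{0}=\frac{\sqrt{\alpha_B}\ket{0}+\ket{1}}{\sqrt{\alpha_B+1}},
\]
which is a single \(R_y\) rotation. SELECT is a controlled-\((I_2\otimes U_B)\) together with a controlled-\(V\).

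By \eqref{eq:lcu-unitary}, the top-left block equals \((\alpha_B\,\tilde B+\tilde E)/(\alpha_B+1)\), where \(\tilde B\) and \(\tilde E\) are the encoded blocks. This equals \(\mathcal P(B)/(\alpha_B+1)\), so the encoding is exact with normalization \(\alpha_{\mathcal P}=\alpha_B+1\).

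\textbf{Cost.} The total cost is one controlled call to \(U_B\) plus \(O(1)\) gates. Control of \(U_B\) is standard and changes \(T_B\) only by a constant factor under the usual convention, which is also how \Cref{lem:block-arithmetic} counts cost. Hence \(T_{\mathcal P}=T_B+O(1)\) in that sense.

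\textbf{Main obstacle.} The step needing care is the exact sign and projector bookkeeping for the off-diagonal block, namely verifying that \(V\) restricted to \(c=\ket{0}\) gives \(-\ket{0}\bra{1}\) and not \(-\ket{1}\bra{0}\) or a \(-X\) leaking onto the diagonal. It also requires ensuring that the padded ancilla conventions (all-zero flag) agree across both branches. I would check this by writing out the action on the four basis states \(\ket{\mathrm{blk},c}\).
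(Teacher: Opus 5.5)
Your proposal is correct and follows the same LCU strategy as the paper. The paper writes the off-diagonal block as $-\ket{0}\bra{1}\otimes I=-\tfrac12 X\otimes I-\tfrac{\mathrm i}{2}Y\otimes I$ and uses a flat three-term LCU with weights $\alpha_B,\tfrac12,\tfrac12$; you instead block-encode $-\ket{0}\bra{1}$ directly with a CNOT flag ancilla (your sign and projector bookkeeping checks out) and use a two-term LCU, which amounts to the same thing since the Pauli pair under a Hadamard preparation is itself a norm-one block-encoding of $\ket{0}\bra{1}$. Both give $\alpha_{\mathcal P}=\alpha_B+1$ with $O(1)$ extra gates around one controlled use of $U_B$, and your caveat that controlling $U_B$ strictly gives $O(T_B)$ rather than $T_B+O(1)$ applies equally to the paper's argument.
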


\begin{proof}
Using
\[
\mathcal P(B)
=
I\otimes B
-\frac12 X\otimes I
-\frac{\mathrm i}{2}Y\otimes I,
\]
the result follows from a three-term LCU construction. The three
LCU normalizations are respectively
\(\alpha_B\), \(1/2\), and \(1/2\), so their sum is
\(\alpha_B+1\). The outer coefficient register has constant
dimension, and its PREPARE, SELECT, and inverse PREPARE circuits
add only \(O(1)\) gates to one use of \(U_B\).
\end{proof}

\section{Periodic boundary conditions}\label{sec:periodic}

We first consider the biharmonic equation on the periodic box
\(\Omega=[-1,1]^d\),
\begin{equation}\label{eq:periodic-biharmonic}
	\begin{cases}
		\Delta^2 u = f, & x\in \Omega,\\
		u \ \text{is periodic in each coordinate}.
	\end{cases}
\end{equation}
Because the periodic Laplace operator has constants in its null space, we
impose the compatibility condition
\[
\int_{\Omega} f(x)\,dx=0
\]
and choose the solution with zero mean. Equivalently, all computations below
are restricted to the mean-zero subspace. Introducing
\[
v=-\Delta u,
\]
the biharmonic equation can be written as two Poisson equations,
\begin{equation}\label{eq:periodic-two-poisson}
	-\Delta v=f,\qquad -\Delta u=v,
\end{equation}
with periodic boundary conditions and zero-mean right-hand sides. Thus the
biharmonic problem can be addressed by one QSVT-VTAA-based linear solver acting on the following expanded system for the positive operator \(-\Delta\):
\[
\begin{bmatrix}
    -\Delta & -I\\
    0& -\Delta
\end{bmatrix}\begin{bmatrix}
    u\\v
\end{bmatrix} = \begin{bmatrix}
    0\\f
\end{bmatrix}.
\]

\subsection{Fourier spectral discretization}

Let \(N=2^m\) be the number of grid points in each spatial direction, and let
\[
x_{\ell}=-1+\frac{2\ell}{N},\qquad \ell=0,1,\ldots,N-1,
\]
be the one-dimensional grid points. We use the centered frequency set
\[
\mathcal K_N=\left\{-\frac N2,-\frac N2+1,\ldots,\frac N2-1\right\}.
\]
For \(k=(k_1,\ldots,k_d)\in \mathcal K_N^d\), define
\[
\phi_k(x)=\exp\!\left(i\pi k\cdot x\right).
\]
Then
\[
-\Delta \phi_k=\lambda_k\phi_k,
\qquad
\lambda_k=\pi^2 |k|^2.
\]
The zero mode \(k=0\) is excluded by the mean-zero condition.

Let \(\mathbf f_N\in\mathbb C^{\Ndof}\) be the vector obtained by sampling the continuous right-hand side \(f\) on the Fourier grid and then projecting out the zero mode.  Applying the \(d\)-dimensional quantum Fourier transform to the normalized state \(\ket{\mathbf f_N}\) gives the Fourier coefficient vector \(\widehat{\mathbf f}_N\).  
Let
\(\Lambda_N\) denote the diagonal matrix with entries
\[
(\Lambda_N)_{kk}=\lambda_k=\pi^2 |k|^2,
\qquad k\in \mathcal K_N^d\setminus\{0\}.
\]
The two Poisson problems in \eqref{eq:periodic-two-poisson} become
\begin{equation}\label{eq:periodic-fourier-systems}
	\Lambda_N \widehat{\mathbf v} = \widehat{\mathbf f}_N,
	\qquad
	\Lambda_N \widehat{\mathbf u} = \widehat{\mathbf v} .
\end{equation}
Hence
\[
    \widehat{\mathbf u}=\Lambda_N^{-2}\widehat{\mathbf f}_N
\]
on the mean-zero subspace. The final physical-space solution state is obtained
by applying the inverse quantum Fourier transform.

In quantum computing, we do not directly solve the linear system
\[
\Lambda_N^{2}\widehat{\mathbf u} = \widehat{\mathbf f}_N.
\]
Since $\kappa(\Lambda_{N}^2) = O(N^4)$ and the performance of the QSVT based algorithm on problems with high condition numbers is rather poor  \cite{NOVIKAU2025113767}, one should instead formulate \eqref{eq:periodic-fourier-systems} as a single
augmented system
\begin{align}\label{eq:expanded-system}
\begin{bmatrix}
    \Lambda_N & -I\\
    0 & \Lambda_N
\end{bmatrix}
\begin{bmatrix}
    \widehat{\mathbf u}\\
    \widehat{\mathbf v}
\end{bmatrix}
=
\begin{bmatrix}
    0\\
    \widehat{\mathbf f}_N
\end{bmatrix}.
\end{align}
This formulation decreases the condition number to $O(dN^2)$.
\begin{remark}
One may use the two successive Poisson solves in \eqref{eq:periodic-fourier-systems}. While this approach saves a few qubits during the block-encoding construction stage, it doubles the circuit depth. Furthermore, executing two successive QSVT-VTAAs generates additional garbage states, leading to the amplitude subnormalization of the target state.
\end{remark}

\subsection{Block-encoding and circuit structure}

The overall circuit structure is shown in \cref{fig:periodic}. The QSVT-VTAA procedure solves the linear system \eqref{eq:expanded-system} for $[\widehat{\mathbf u},\widehat{\mathbf v}]^{\mathsf T}$. 
After extracting the \(\widehat{\mathbf u}\)-component, the inverse
quantum Fourier transform produces the physical-space solution state
\(\ket{\mathbf u}\).

\begin{figure}[h]
	\centering
	\maybeincludegraphics[width=1.0\linewidth]{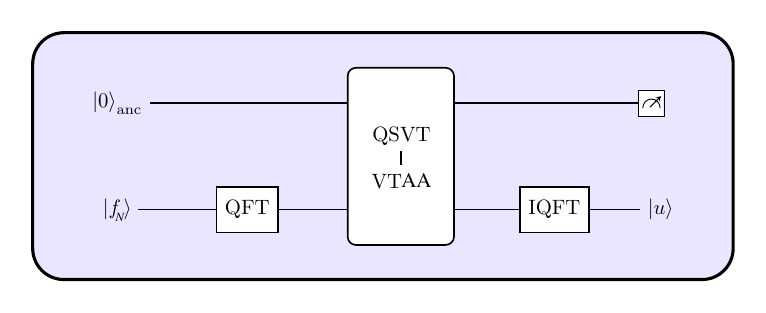}
	\caption{Circuit structure for the periodic biharmonic problem. the
    QSVT-VTAA linear-system solver prepares the augmented solution
    state, and the inverse QFT recovers the physical-space solution
    after extraction of the \(\widehat{\mathbf u}\)-component.}
	\label{fig:periodic}
\end{figure}

It remains to construct a block-encoding of the diagonal operator
\(\Lambda_N\). We first describe the one-dimensional frequency operator. Let
\(q_\ell=\ell-N/2\), \(\ell=0,1,\ldots,N-1\), and define
\[
K_N=\operatorname{diag}(q_0,q_1,\ldots,q_{N-1}).
\]
If \(\ell=\sum_{r=0}^{m-1}2^r\ell_r\) is the binary representation of
\(\ell\), then
\[
\operatorname{diag}(0,1,\ldots,2^m-1)
=
\sum_{r=0}^{m-1}2^r \Pi_r,
\]
where
\[
\Pi_r
=
I^{\otimes(m-r-1)}
\otimes \ket{1}\bra{1}
\otimes I^{\otimes r}.
\]
Using \(\ket{1}\bra{1}=(I-Z)/2\), we obtain
\begin{equation}\label{eq:centered-frequency-lcu}
	K_N
	=
	-\frac12 I
	-
	\sum_{r=0}^{m-1}2^{r-1} Z_r ,
\end{equation}
where \(Z_r\) denotes the Pauli-\(Z\) operator acting on the \(r\)-th qubit
under the above binary convention. Thus \(K_N\) is a linear combination of
Pauli-\(Z\) operators and the identity. 

The periodic Laplace operator in the Fourier basis is
\begin{equation}\label{eq:periodic-lambda}
	\Lambda_N
	=
	\pi^2
	\sum_{\ell=1}^d
	I^{\otimes(\ell-1)}
	\otimes K_N^2
	\otimes I^{\otimes(d-\ell)} .
\end{equation}
A block-encoding of \(K_N^2\) is obtained by composing two block-encodings of
\(K_N\). 
The augmented matrix can be decomposed as
\begin{equation}\label{eq:block-encoding-expand}
P_N
=
I\otimes\Lambda_N
-\frac12 X\otimes I
-\frac{\mathrm i}{2}Y\otimes I.
\end{equation}

\begin{lemma}[Block-encoding of the periodic augmented matrix]
\label{lem:periodic-block-encoding}
The matrix \(P_N\) in \eqref{eq:expanded-system} admits an exact
block-encoding with normalization
\[
\alpha_P
=
\frac{\pi^2dN^2}{4}+1.
\]
Under a direct implementation of the directional SELECT operation,
one use of this block-encoding has gate cost
\[
T_{P_N}=O(d\log N).
\]
Moreover, its normalization overhead satisfies
\[
\beta_{P_N}
:=
\frac{\alpha_P}{\norm{P_N}}
=
O(1).
\]
\end{lemma}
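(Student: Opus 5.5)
The plan is to build the block-encoding bottom-up along the decompositions \eqref{eq:centered-frequency-lcu}, \eqref{eq:periodic-lambda} and \eqref{eq:block-encoding-expand}, tracking the normalization at each step using \Cref{lem:block-arithmetic}. At the end I apply \Cref{lem:augmented-block-encoding} and compare with a lower bound on \(\norm{P_N}\).

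\textbf{Step 1: block-encoding of \(K_N\).} Equation \eqref{eq:centered-frequency-lcu} writes \(K_N\) as an LCU of the unitaries \(-I\) and \(-Z_r\), \(r=0,\dots,m-1\), with positive coefficients \(\tfrac12\) and \(2^{r-1}\). Their sum is
\[
\tfrac12+\sum_{r=0}^{m-1}2^{r-1}=\tfrac12+\tfrac{2^m-1}{2}=\tfrac N2 .
\]
The LCU construction of \cref{subsec:lcu} therefore gives an exact \((N/2,\lceil\log_2(m+1)\rceil,0)\)-block-encoding of \(K_N\).

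For the cost, SELECT consists of \(m\) controlled-\(Z\) gates, each controlled on one value of the index register. PREP prepares a state on \(O(\log m)=O(\log\log N)\) qubits with amplitudes proportional to \(\sqrt{1/2},\sqrt{2^{r-1}}\). By the tree-based procedure this costs \(O(m)\) gates. Hence one use costs \(O(m)=O(\log N)\).

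This value \(N/2\) is also exactly \(\norm{K_N}=\max_\ell|q_\ell|=N/2\), attained at \(q_0=-N/2\). So there is no normalization loss at this stage.

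\textbf{Step 2: from \(K_N^2\) to \(\Lambda_N\).} Composing two copies of the Step 1 block-encoding, with separate ancilla registers, gives an exact block-encoding of \(K_N^2\). By item 2 of \Cref{lem:block-arithmetic} its normalization is \(N^2/4\) and its cost is \(O(\log N)\).

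Next, \eqref{eq:periodic-lambda} is a uniform LCU over the \(d\) directions, each term with weight \(\pi^2\). A direction register is prepared in the uniform superposition over \(d\) labels at cost \(O(\log d)\). The directional SELECT applies the \(K_N^2\) block-encoding to the \(\ell\)-th spatial register, controlled on \(\ket{\ell}\).

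Under a direct implementation this means \(d\) controlled copies, each of cost \(O(\log N)\), for a total of \(O(d\log N)\). The resulting normalization of \(\Lambda_N\) is \(\pi^2 dN^2/4\).

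\textbf{Step 3: the augmented matrix.} Applying \Cref{lem:augmented-block-encoding} with \(B=\Lambda_N\), via the three-term LCU \eqref{eq:block-encoding-expand}, gives
\[
\alpha_P=\frac{\pi^2dN^2}{4}+1, \qquad T_{P_N}=O(d\log N)+O(1)=O(d\log N).
\]

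\textbf{Step 4: the overhead \(\beta_{P_N}\).} For a lower bound on the norm, take a unit vector \(\mathbf x\) and compute
\[
P_N\begin{bmatrix}0\\ \mathbf x\end{bmatrix}=\begin{bmatrix}-\mathbf x\\ \Lambda_N\mathbf x\end{bmatrix}.
\]
This gives \(\norm{P_N}\ge\norm{\Lambda_N}\).

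Now choose \(\mathbf x\) to be the Fourier mode \(k=(-N/2,\dots,-N/2)\). This mode is nonzero, so it lies in the mean-zero subspace, and it has eigenvalue \(\pi^2 dN^2/4\). Therefore \(\norm{\Lambda_N}=\pi^2dN^2/4\) even on that subspace. It follows that
\[
\beta_{P_N}\le 1+\frac{4}{\pi^2 dN^2}\le 1+\frac{4}{\pi^2}=O(1).
\]

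\textbf{Main obstacle.} I expect the only delicate point to be the gate count rather than the normalization. Specifically, one must check that the PREP for the geometric coefficients \(2^{r-1}\) and the controlled directional SELECT stay within \(O(d\log N)\). This includes the cost of the multi-controlled gates that select on the index register, which the statement allows to be counted as in \Cref{lem:lap-be}.
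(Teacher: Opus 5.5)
Your proposal is correct and follows essentially the same route as the paper: the LCU normalization $N/2=\norm{K_N}$ for $K_N$, composition for $K_N^2$, a direct directional SELECT giving $\alpha_\Lambda=\pi^2dN^2/4=\norm{\Lambda_N}$, \Cref{lem:augmented-block-encoding} for $P_N$, and the sandwich $\norm{\Lambda_N}\le\norm{P_N}\le\alpha_P$ to obtain $\beta_{P_N}=O(1)$. Your additional checks are consistent with the paper's argument and make it slightly more explicit; these cover the ancilla count, the PREP cost, and the fact that the extremal mode $k=(-N/2,\ldots,-N/2)$ lies in the mean-zero subspace.
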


\begin{proof}
From \eqref{eq:centered-frequency-lcu}, the LCU normalization of
\(K_N\) is
\[
\alpha_{K_N}
=
\frac12+\sum_{r=0}^{m-1}2^{r-1}
=
\frac N2
=
\norm{K_N},
\]
and its block-encoding has gate cost \(O(\log N)\). Composing two
copies gives a block-encoding of \(K_N^2\) with normalization
\(N^2/4\) and the same asymptotic gate cost.

Applying a direct directional SELECT operation to the \(d\) terms
in \eqref{eq:periodic-lambda} therefore gives
\[
\alpha_\Lambda
=
\frac{\pi^2dN^2}{4}
=
\norm{\Lambda_N},
\qquad
T_{\Lambda_N}=O(d\log N).
\]
Applying Lemma \ref{lem:augmented-block-encoding} to
\eqref{eq:block-encoding-expand} yields
\[
\alpha_P=\alpha_\Lambda+1,
\quad
T_{P_N}=T_{\Lambda_N}+O(1)
       =O(d\log N).
\]

Finally,
\[
\norm{\Lambda_N}
\leq
\norm{P_N}
\leq
\norm{\Lambda_N}+1
=
\alpha_P.
\]
Hence
\[
1
\leq
\beta_{P_N}
=
\frac{\alpha_P}{\norm{P_N}}
\leq
1+\frac{1}{\norm{\Lambda_N}}
=
O(1).
\]
\end{proof}


\subsection{Complexity}
We now estimate the complexity of the periodic Fourier method.  For the augmented output, define
\[
\rho_N:=\frac{\norm{\widehat{\mathbf v}}}{\norm{\widehat{\mathbf u}}}
=\frac{\norm{\mathbf v}}{\norm{\mathbf u}},
\]
where the equality follows from the unitarity of the Fourier transform.
\begin{theorem}[Periodic boundary conditions]
\label{thm:periodic-complexity}
Consider the periodic biharmonic equation
\eqref{eq:periodic-biharmonic} on the mean-zero subspace,
and let the Fourier grid contain \(N\) points in each spatial
direction. Under the input model of Assumption~\ref{ass:input-output},
the QSVT-VTAA Fourier method prepares, up to state error
\(\varepsilon\), a normalized augmented state proportional to
$
\bigl(\widehat{\mathbf u},\widehat{\mathbf v}\bigr)
$
with gate complexity
\[
\Oht\!\left(d^2N^2\log\frac{1}{\varepsilon}\right).
\]
After amplitude-amplified extraction of the
\(\widehat{\mathbf u}\)-component and application of the inverse
quantum Fourier transform, it prepares \(\ket{\mathbf u}\) with
gate complexity
\[
\Oht\!\left(
d^2N^2\log\frac{1}{\varepsilon}\sqrt{1+\rho_N^2}
\right),
\qquad
\rho_N:=\frac{\norm{\widehat{\mathbf v}}}
{\norm{\widehat{\mathbf u}}}.
\]
Here the \(\Oht\)-notation suppresses polylogarithmic factors in
\(d\), \(N\).
\end{theorem}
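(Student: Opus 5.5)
The plan is to combine \Cref{lem:periodic-block-encoding} with the QSVT--VTAA cost bound of \Cref{thm:qsvt-vtaa}. The only new ingredient is a condition-number estimate for \(P_N\) on the mean-zero subspace \(\mathcal S\), i.e.\ the span of frequencies \(k\neq 0\) in both blocks. The right-hand side \((0,\widehat{\mathbf f}_N)\) lies in \(\mathcal S\) by construction, so the lemma applies. \Cref{lem:periodic-block-encoding} already supplies \(\beta_{P_N}=O(1)\), \(T_{P_N}=O(d\log N)\) and \(\norm{P_N}\le \norm{\Lambda_N}+1=O(dN^2)\).

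\textbf{Bounding \(\norm{P_N^{-1}}\).} On \(\mathcal S\) we can write down the inverse explicitly:
\[
P_N^{-1}=\begin{bmatrix}\Lambda_N^{-1} & \Lambda_N^{-2}\\ 0 & \Lambda_N^{-1}\end{bmatrix}.
\]
Since \(\lambda_{\min}=\pi^2\) (attained at \(|k|=1\)), each block has norm at most \(\pi^{-2}\) or \(\pi^{-4}\). Hence \(\norm{P_N^{-1}}=O(1)\) and \(\kappa(P_N)=O(dN^2)\).

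This step is where the augmented formulation pays off. The off-diagonal block \(-I\) stays bounded relative to the diagonal only because the smallest eigenvalue is bounded below independently of \(N\). I would verify this carefully, noting that the block structure is diagonal in \(k\). The problem then reduces to \(2\times 2\) matrices \(\begin{bmatrix}\lambda & -1\\ 0&\lambda\end{bmatrix}\) with \(\lambda\ge\pi^2\), whose singular values can be bounded directly.

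\textbf{Gate count for the augmented state.} Substituting into \Cref{thm:qsvt-vtaa} gives
\[
\Oht\bigl(\beta_{P_N}\kappa(P_N)T_{P_N}\log(1/\varepsilon)\bigr)=\Oht\bigl(dN^2\cdot d\log N\cdot\log(1/\varepsilon)\bigr),
\]
which is \(\Oht(d^2N^2\log(1/\varepsilon))\) after suppressing \(\log N\). State preparation of \(\ket{\widehat{\mathbf f}_N}\) costs a QFT of \(O(d\log^2 N)\) gates on top of the assumed input state, and this is absorbed.

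\textbf{Extracting \(\ket{\mathbf u}\).} The augmented state is
\[
\bigl(\widehat{\mathbf u}\ket{0}+\widehat{\mathbf v}\ket{1}\bigr)/\sqrt{\norm{\widehat{\mathbf u}}^2+\norm{\widehat{\mathbf v}}^2}.
\]
Measuring or flagging the block qubit in \(\ket{0}\) succeeds with amplitude \(1/\sqrt{1+\rho_N^2}\). Amplitude amplification needs \(O(\sqrt{1+\rho_N^2})\) repetitions of the solver and its inverse, which multiplies the cost by that factor. The inverse QFT (cost \(O(d\log^2N)\), unitary, so \(\rho_N\) is the same in physical space) then yields \(\ket{\mathbf u}\).

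\textbf{Error accounting.} An \(\varepsilon\)-accurate augmented state might give an \(O(\varepsilon\sqrt{1+\rho_N^2})\)-accurate post-selected state. I would handle this by running the solver at accuracy \(\varepsilon/\sqrt{1+\rho_N^2}\), which only changes the logarithmic factor and is absorbed into the \(\Oht\) notation. Alternatively one can rely on amplitude amplification's standard robustness.
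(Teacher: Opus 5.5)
Your proposal is correct and takes essentially the same route as the paper. Like the paper, you bound $\kappa(P_N)=O(dN^2)$ from the block-triangular structure and $\lambda_{\min}(\Lambda_N)=\pi^2$, plug $\beta_{P_N}=O(1)$ and $T_{P_N}=O(d\log N)$ from \Cref{lem:periodic-block-encoding} into \Cref{thm:qsvt-vtaa}, and pay an $O(\sqrt{1+\rho_N^2})$ factor for amplitude-amplified extraction, with the QFT costs absorbed. Your explicit formula for $P_N^{-1}$ and your choice to run the solver at accuracy $\varepsilon/\sqrt{1+\rho_N^2}$ make precise two points the paper leaves implicit; the latter costs only a logarithmic factor since $\rho_N\le\norm{\Lambda_N}=O(dN^2)$.
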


\begin{proof}
The block-triangular forms of \(P_N\) and \(P_N^{-1}\), together
with \(\lambda_{\min}(\Lambda_N)=\pi^2\) and
\(\lambda_{\max}(\Lambda_N)=\Theta(dN^2)\), imply
\[
\kappa(P_N)=\Theta(dN^2).
\]

By Lemma \ref{lem:periodic-block-encoding},
\[
\beta_{P_N}=O(1),
\qquad
T_{P_N}=O(d\log N).
\]
Therefore, Lemma \ref{thm:qsvt-vtaa} gives
\[
\Oht\!\left(
\beta_{P_N}\kappa(P_N)T_{P_N}\log\frac{1}{\varepsilon}
\right)
=
\Oht(d^2N^2\log\frac{1}{\varepsilon})
\]
gates for preparing the normalized augmented solution state.

The probability of selecting the
\(\widehat{\mathbf u}\)-component is
\[
\frac{\norm{\widehat{\mathbf u}}^2}
{\norm{\widehat{\mathbf u}}^2+
 \norm{\widehat{\mathbf v}}^2}
=
\frac{1}{1+\rho_N^2}.
\]
Amplitude amplification therefore introduces an additional factor
\(O(\sqrt{1+\rho_N^2})\). The quantum Fourier transform and its inverse
have polylogarithmic gate cost and are absorbed into the
\(\Oht\)-notation.
\end{proof}

\begin{remark}[Behavior of \(\rho_N\) and discretization error]\label{rem:periodic-rho}
If the continuous solution is fixed while the grid is refined and \(u\in C^2(\overline\Omega)\) is periodic with \(\norm{u}_{L^2}\ne0\), then \(v=-\Delta u\in C(\overline\Omega)\) is fixed and \(\rho_N=O(1)\).  In this regime \Cref{thm:periodic-complexity} gives the fixed-grid state-preparation cost \(\Oht(dN^2\log(N/\varepsilon))\).  If one instead takes the worst case over arbitrary discrete right-hand sides, high-frequency modes give \(\rho_N=O(dN^2)\), and the conservative bound becomes \(\Oht(d^3N^4 \text{polylog}(N/\varepsilon))\).

If, in addition, the periodic solution is analytic in a complex neighborhood of the domain, the Fourier truncation error decays exponentially, so \(N=O(\log(1/\varepsilon))\) suffices for discretization error \(O(\varepsilon)\).  This yields \(\Oht(d^2 \text{polylog} (1/\varepsilon))\) in the fixed-smooth-solution regime. 
\end{remark}

\begin{remark}[State-output advantage regime]\label{rem:periodic-advantage}
A classical FFT-based spectral solver has near-linear cost \(\Oht(N^d)\) for producing the full vector of \(\Ndof=N^d\) grid values.  In the fixed-smooth-solution regime \(\rho_N=O(1)\), the quantum state-output cost is \(\Oht(d^2N^2)\), which is favorable in the power of \(N\) for integer dimensions \(d\ge3\).  In the arbitrary-discrete-data worst case \(\rho_N=O(dN^2)\), the bound is \(\Oht(d^3N^4)\), giving the threshold \(d\ge5\).  
These comparisons use the amplitude-encoding input and state-output model of Assumption~\ref{ass:input-output} and do not include full classical reconstruction from \(\ket{\mathbf u}\).
\end{remark}

\section{Simply supported boundary conditions}\label{sec:ss}

We next consider the biharmonic equation with simply supported boundary conditions,
\begin{equation}\label{eq:ss-pde}
\begin{cases}
    \Delta^2 u=f,\qquad x\in\Omega=(0,1)^d,\\
    u|_{\partial\Omega}=0,\qquad \Delta u|_{\partial\Omega}=0.
\end{cases}
\end{equation}
This case is important in plate theory. It is also algorithmically favorable because the equation decouples into two Dirichlet Poisson equations. Introducing 
\[
v=-\Delta u,
\]
the biharmonic equation can be reformulated as the decoupled system
\begin{align}\label{dls}
-\Delta v = f,\qquad -\Delta u = v.
\end{align}
Each Poisson equation is equipped with a Dirichlet boundary condition. Similar to the discussion in \Cref{sec:periodic}, we consider the following expanded linear system for the positive operator \(-\Delta\):
\[
\begin{bmatrix}
    -\Delta & -I\\
    0& -\Delta
\end{bmatrix}
\begin{bmatrix}
    u\\ v
\end{bmatrix}
=
\begin{bmatrix}
    0\\ f
\end{bmatrix}.
\]

\subsection{Finite difference method}\label{fdm}
We put $2^m + 1$ grid points on each coordinate, including the endpoints, with uniform mesh size. Let $N = 2^m-1$ be the number of interior grid points in each spatial direction. The mesh point is denoted by
\[
\mathbf x=(x^{(1)}_{j_1},x^{(2)}_{j_2},\ldots,x^{(d)}_{j_d}),
\qquad j_k=0,1,\ldots,N-1 .
\]
We apply the centered three-point formula to discretize the second-order partial derivatives on the $k$th coordinate of $u(\mathbf x)$. With the other variables fixed,
\begin{align*}
\frac{\partial^2u}{\partial x_k^2}(\mathbf x)
&=h^{-2}\Bigl(u(x_{j_k+1}^{(k)})-2u(x_{j_k}^{(k)})\\
&\hspace{2.0cm}+u(x_{j_k-1}^{(k)})\Bigr)+O(h^2).
\end{align*}
The corresponding $d$-dimensional discrete Laplacian is
\[
L_h^{(d)}
	=
	\sum_{\ell=1}^d
	I^{\otimes(\ell-1)}
	\otimes L_h^{(1)}
	\otimes I^{\otimes(d-\ell)},
\]
where $L_h^{(1)}$ is defined in Eq.~\eqref{laplace}. Therefore, the discrete linear system is 
\begin{align}\label{eq:dst-matrix}
\begin{bmatrix}
    L_h^{(d)} & -I\\
    0& L_h^{(d)}
\end{bmatrix}
\begin{bmatrix}
    \mathbf u\\ \mathbf v
\end{bmatrix}
=
\begin{bmatrix}
    0\\ \mathbf f_h
\end{bmatrix}.
\end{align}
Here \(\mathbf f_h\in\mathbb C^{\Ndof}\) denotes the vector of interior grid samples of the continuous source term \(f\).  Norms such as \(\norm{\mathbf f_h}\) in the complexity bounds refer to this discrete vector.
Denote
\begin{align}\label{eq:s-s-matrix}
S_N =
\begin{bmatrix}
    L_h^{(d)} & -I\\
    0& L_h^{(d)}
\end{bmatrix}.
\end{align}
The explicit block-encoding circuit for $L^{(d)}_h$ was recalled in \Cref{subsec:laplacian-be}. The decomposition of $P_N$ in Eq.~\eqref{eq:block-encoding-expand} can be applied to $S_N$. Furthermore, with the state-preparation circuits and LCU discussed previously, one can build the block-encoding of $S_N$. However, we do not use this procedure, since the block-encoding of $L^{(1)}_h$ is relatively more complicated than the block-encoding of the diagonal matrix. Instead, we exploit the fact that $L^{(1)}_h$ is diagonalized by the discrete sine transform (DST). 

The DST can be defined as follows: Let $ (x_1,x_2,\ldots,x_N)$ be a vector of length $N$. Its DST is 
\begin{align*}
X_k&=\sqrt{\frac{2}{N+1}}\sum_{n=1}^{N}x_n\sin\!\left(\frac{\pi nk}{N+1}\right),
\\&\hspace{1cm} k=1,2,\ldots,N .
\end{align*}
Let $D^{(1)}_h$ denote the diagonal matrix obtained by diagonalizing $L^{(1)}_h$. Its entries are
\begin{align*}
[D^{(1)}_h]_{kk}
&=\frac{4}{h^2}
\sin^2\!\left(\frac{k\pi}{2(N+1)}\right),\\
&\hspace{1cm} k=1,2,\ldots,N .
\end{align*}
To facilitate the computation of the DST and the construction of the matrix block-encoding on a quantum computer, we pad $D_{h}^{(1)}$ into the following $2^m \times 2^m$ matrix:
\[
D^{(1)} = \begin{bmatrix}
    0&\\&D_{h}^{(1)}
\end{bmatrix}_{2^m\times 2^m}.
\]
We then define the $d$-dimensional linear operator $D^{(d)}$ as
\[
D^{(d)} = \sum_{\ell=1}^d
	I^{\otimes(\ell-1)}
	\otimes D^{(1)}
	\otimes I^{\otimes(d-\ell)}.
\]
Thus the linear system in \eqref{eq:dst-matrix} comes to 
\begin{align}\label{eq:dst-system}
D_N \begin{bmatrix}
    \widehat{\mathbf u}\\\widehat{\mathbf v}
\end{bmatrix} = \begin{bmatrix}
    D^{(d)} & -I\\
    0& D^{(d)}
\end{bmatrix}\begin{bmatrix}
    \widehat{\mathbf u}\\\widehat{\mathbf v}
\end{bmatrix} = \begin{bmatrix}
    0\\\widehat{\mathbf f}_h
\end{bmatrix},
\end{align}
where $[0,\widehat{\mathbf f}_h]^{\mathsf T}$ is obtained by applying the DST to $[0,\mathbf f_h]^{\mathsf T}$. The physical-space vector $[\mathbf u,\mathbf v]^{\mathsf T}$ is recovered from $[\widehat{\mathbf u},\widehat{\mathbf v}]^{\mathsf T}$ by the inverse DST.

Notably, define the physical sine-transform subspace 
\begin{align*}
\mathcal H_{\rm phys} = \operatorname{span}\{\ket{k_1,\ldots,k_d}: 1\le k_{\ell}\le N,\\\quad \ell = 1,\ldots,d\}.
\end{align*}
The augmented matrix $D_N$ leaves 
$
\mathbb C^2\otimes \mathcal H_{\rm phys} 
$
 invariant. The transformed right-hand side belongs to this subspace and the restriction of the padded system $D_N$ is exactly the original DST-diagonalized finite-difference system. Moreover, the QSVT-VTAA inverse is applied on this invariant nonsingular subspace and the remaining padded models are not populated. 
\subsection{Block-encoding and circuit structure}
The overall circuit structure is shown in \cref{fig:qst-biharmonic}. The QSVT-VTAA procedure solves  system \eqref{eq:dst-system} for $[\widehat{\mathbf u},\widehat{\mathbf v}]^{\mathsf T}$. Conditioned on successful postselection of the ancilla registers, one can recover $[\mathbf u,\mathbf v]$ by the inverse DST.
\begin{figure}[h]
    \centering
    \maybeincludegraphics[width=1.0\linewidth]{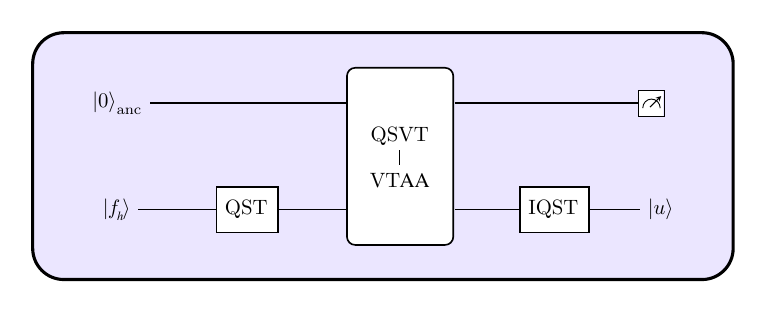}
    \caption{Circuit structure for the simply supported biharmonic problem.}
    \label{fig:qst-biharmonic}
\end{figure}

It remains to construct the quantum sine transform and the block-encoding of the linear operator in Eq.~\eqref{eq:dst-system}. We first focus on the implementation of the DST on a quantum computer. According to A. Klappenecker's work \cite{klappenecker2001discrete}, one can implement the DST on a quantum computer with complexity $O(\log^2 N)$. The explicit quantum circuit is shown in \cref{dst}. $S_1$ is defined as in Eq.~\eqref{eq:plus1}. $B$ is the one-qubit unitary
\[
B = \frac{1}{\sqrt{2}}
\begin{bmatrix}
1&i\\1&-i
\end{bmatrix}.
\]
\begin{figure}[h]
    \centering
    \maybeincludegraphics[width=1.0\linewidth]{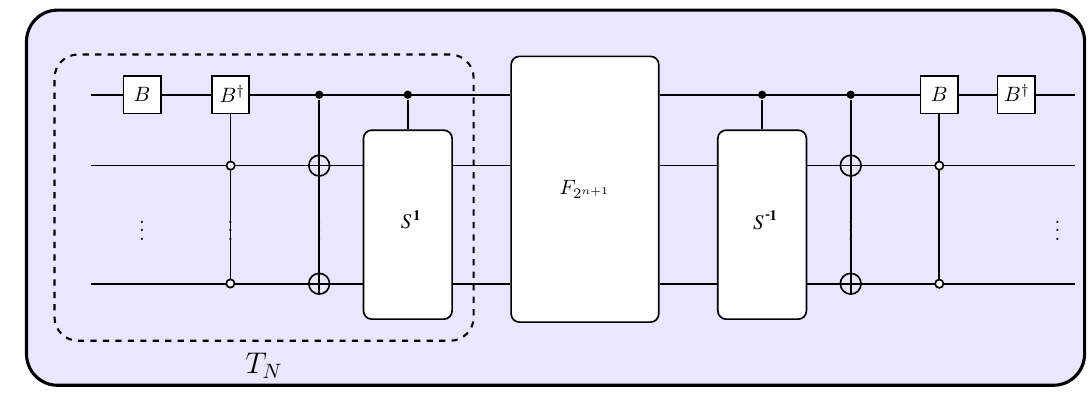}
    \caption{Explicit quantum circuit for the discrete sine transform (QST).}
    \label{dst}
\end{figure}
We briefly describe how this circuit implements the one-dimensional DST; the $d$-dimensional transform is obtained by applying it tensor-product-wise over the coordinates. In the DST procedure, we use $m+1$ qubits, so the total dimension is $M = 2^{m+1}=2(N+1)$. Let $N = 2^m-1$. We denote the computational basis states of these $m+1$ qubits as $|b, x\rangle$, where $b \in \{0,1\}$ is the most significant bit and $x \in \{0,1,\ldots,N\}$ labels the remaining $m$ qubits.

The discrete data to be transformed are the entries of $\mathbf f_h$. Assume that we prepare the initial state:
\[
\ket{\psi_{\rm in}} = \frac{1}{\|\mathbf f_h\|}\sum_{x=1}^{N} (\mathbf f_h)_x \ket{1,x}.
\]
(The integer index for $\ket{1,x}$ is $(N+1)+x$, and the amplitude of $\ket{1,0}$ remains zero.)

A direct calculation gives
\[
\begin{aligned}
T_N^\dagger F_{2(N+1)} T_N \ket{\psi_{\rm in}}
&= \ket{\psi_{\rm out}}\\
&= \frac{1}{\|\mathbf f_h\|}\sum_{k=1}^{N} i(\widehat{\mathbf f}_h)_k \ket{1,k}.
\end{aligned}
\]
Here $T_N$ is defined in \cref{dst}, 
\(F_{2^{n+1}}\) denotes the \(2^{n+1}\)-dimensional quantum Fourier
	transform,
and $(\widehat{\mathbf f}_h)_k$ is the DST coefficient of the discrete source vector $\mathbf f_h$. The global phase $i$ can be removed by a phase gate, such as $S^\dagger$, on the ancilla qubit. We then view $\ket{\psi_{\rm out}}$ as a $2^m$-dimensional vector with zero amplitude on $\ket{1,0}$, matching the padded diagonal matrix $D_h^{(1)}$. Thus the padded system is equivalent to the original system together with the trivial equation $0\cdot x=0$. For the $d$-dimensional case, one prepares the the amplitude state associated with the full grid vector $\mathbf f_h$ and applies the quantum sine transform on every coordinate.

Next we focus on designing the block-encoding of $D^{(1)}$. A recent work by Zecchi et al.~\cite{zecchi2026blockencodingsparsematrices} gives explicit quantum circuits to block-encode the diagonal matrix $V(\omega) = \operatorname{diag}(1,e^{i\omega},\cdots,e^{iN\omega})$. We can simply choose $\omega = \frac{\pi}{2(N+1)}$ and use the LCU algorithm on the block-encoding of $V(\omega)$ and $V(-\omega)$ to construct the block-encoding of
\[
\widehat{D} = \operatorname{diag}\!\left(0,\sin\frac{\pi}{2(N+1)},\ldots,\sin\frac{\pi N}{2(N+1)}\right).
\]
The explicit quantum circuits are shown in \cref{hatd} and \cref{fig:vomega}. The gate $P(\omega)$ is the phase gate
\[
P(\omega)=
\begin{bmatrix}
1&0\\
0&e^{i\omega}
\end{bmatrix}.
\]
\begin{figure}[h]
    \centering
    \maybeincludegraphics[width=1\linewidth]{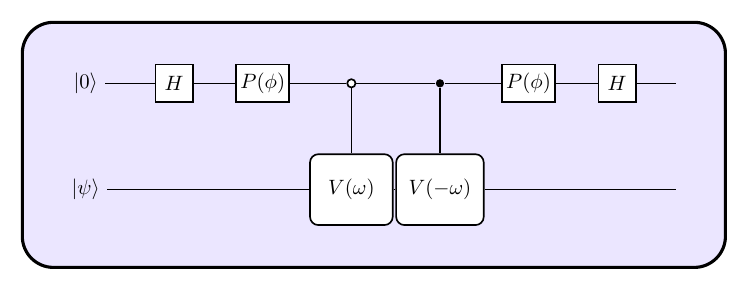}
    \caption{Block-encoding of $\widehat D$.}
    \label{hatd}
\end{figure}
\begin{figure}[h]
    \centering
    \maybeincludegraphics[width=1\linewidth]{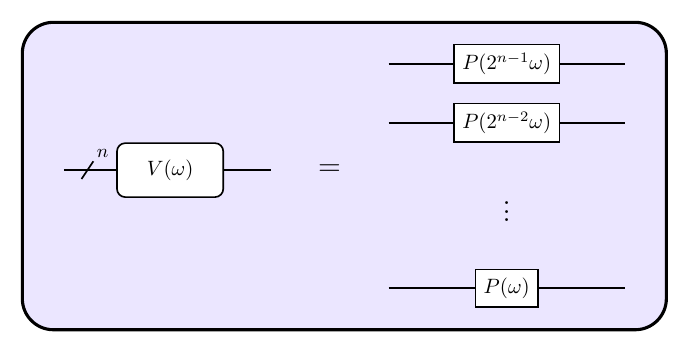}
    \caption{Block-encoding of $V(\omega)$.}
    \label{fig:vomega}
\end{figure}

Then $D^{(1)} = (4/h^2)\widehat{D}^2$ and $D^{(d)} = \sum_{\ell=1}^d
I^{\otimes(\ell-1)}
\otimes D^{(1)}
\otimes I^{\otimes(d-\ell)}$. Thus the block-encoding of $D^{(1)}$ can be efficiently built by matrix multiplication and LCU algorithms. One can then use the same decomposition in Eq.~\eqref{eq:block-encoding-expand} and implement the block-encoding for $D_N$. The resulting LCU construction is identical in form and is omitted for brevity.

\begin{lemma}[Block-encoding of the simply supported augmented matrix]
\label{lem:sim-sup-block-encoding}
The matrix \(D_N\) in \eqref{eq:dst-system} admits an exact
block-encoding with normalization
\[
\alpha_{D_N}
=
4d(N+1)^2 + 1.
\]
Under a direct implementation of the directional SELECT operation,
one use of this block-encoding has gate cost
\[
T_{D_N}=O(d\log N).
\]
Moreover, its normalization overhead satisfies
\[
\beta_{D_N}
:=
\frac{\alpha_{D_N}}{\norm{D_N}}
=
O(1).
\]
\end{lemma}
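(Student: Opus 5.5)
The proof follows the template of \Cref{lem:periodic-block-encoding}: first block-encode the one-dimensional diagonal factor, then assemble the $d$-dimensional operator by a directional LCU, and finally apply \Cref{lem:augmented-block-encoding}.

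\textbf{Step 1: block-encode $\widehat D$.} Start from the circuits of \cref{fig:vomega}. Since $V(\omega)=\bigotimes_r P(2^r\omega)$ is itself unitary, it is a $(1,0,0)$-block-encoding built from $m$ phase gates. Because
\[
\widehat D = \frac{1}{2\mathrm i}\bigl(V(\omega)-V(-\omega)\bigr),
\]
a two-term LCU with weights $1/2,1/2$ (the phase $\mp\mathrm i$ absorbed into the unitaries) gives an exact block-encoding of $\widehat D$ with normalization $1$. Its gate cost is $O(m)=O(\log N)$. Here the diagonal index $k=0$ gives $\sin 0=0$, which matches the padding.

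\textbf{Step 2: block-encode $D^{(1)}$ and $D^{(d)}$.} Composing two copies of $\widehat D$ via \Cref{lem:block-arithmetic} yields $\widehat D^2$ with normalization $1$. Scaling by $4/h^2$, with $h=1/(N+1)$, gives $D^{(1)}$ with normalization $4(N+1)^2$ and cost $O(\log N)$. A uniform directional PREP over $d$ terms, with SELECT applying the $\ell$-th copy on the $\ell$-th coordinate register, gives $D^{(d)}$ with normalization $4d(N+1)^2$ and cost $O(d\log N)$. The $O(\log d)$ PREP cost is absorbed.

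\textbf{Step 3: augmented matrix.} The decomposition $D_N=I\otimes D^{(d)}-\tfrac12X\otimes I-\tfrac{\mathrm i}{2}Y\otimes I$ holds as in \eqref{eq:block-encoding-expand}. Applying \Cref{lem:augmented-block-encoding} gives $\alpha_{D_N}=4d(N+1)^2+1$ and $T_{D_N}=O(d\log N)$.

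\textbf{Step 4: normalization overhead.} This is the only step needing care. As in the periodic case, $\norm{D^{(d)}}\le\norm{D_N}\le\norm{D^{(d)}}+1$. The lower bound holds because $D^{(d)}$ is a diagonal block of $D_N$ (compress with $\ket0\bra0\otimes I$). Next,
\[
\norm{D^{(d)}}=d\cdot 4(N+1)^2\sin^2\!\frac{N\pi}{2(N+1)}.
\]
The main point is that this maximal eigenvalue is comparable to $\alpha_{D_N}$: $\sin^2\bigl(\tfrac{\pi}{2}-\tfrac{\pi}{2(N+1)}\bigr)=\cos^2\tfrac{\pi}{2(N+1)}\ge\cos^2(\pi/4)=1/2$ for $N\ge1$. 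Hence $\norm{D_N}\ge 2d(N+1)^2$, and therefore $\beta_{D_N}\le (4d(N+1)^2+1)/(2d(N+1)^2)\le 3$, i.e.\ $\beta_{D_N}=O(1)$. In fact $\beta_{D_N}\to1$ as $N\to\infty$.
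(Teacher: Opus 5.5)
Your proposal is correct and follows essentially the same route as the paper: an LCU of $V(\pm\omega)$ for $\widehat D$, squaring and scaling by $4/h^2=4(N+1)^2$ to get $D^{(1)}$, a directional LCU for $D^{(d)}$, \Cref{lem:augmented-block-encoding} for $D_N$, and a bound on $\beta_{D_N}$ from comparing $\alpha_{D_N}$ with $4d(N+1)^2\sin^2\frac{\pi N}{2(N+1)}$. Your Step 4 is slightly more careful than the paper's: the paper writes this quantity directly as $\norm{D_N}$, whereas you justify it as a lower bound via $\norm{D_N}\ge\norm{D^{(d)}}$ and make the constant $3$ explicit through $\cos^2\frac{\pi}{2(N+1)}\ge\frac12$.
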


\begin{proof}
    According to \Cref{fig:vomega}, the LCU normalization of $D^{(d)}$ is
    $$
    \alpha_{D^{(d)}} = 4d(N + 1)^2
    $$
    and its block-encoding has gate cost $O(d\log N)$.
    Applying Lemma \ref{lem:augmented-block-encoding} to $D_N$ in system 
\eqref{eq:dst-system} yields
\begin{align*}
&\alpha_{D_N} = 4d(N+1)^2 + 1\\ &T_{D_N} = T_{D^{(d)}} + O(1) = O(d\log N).
\end{align*}

Finally, 
$$
1\leq\beta_{D_N} = \frac{4d(N+1)^2 + 1}{4d(N+1)^2\sin^2\frac{\pi N}{2(N+1)}}\leq 3 = O(1).
$$
\end{proof}

\subsection{Complexity}
We now estimate the complexity of the DST-based finite-difference method.  For the augmented output, define
\[
\rho_N:=\frac{\norm{\mathbf v}}{\norm{\mathbf u}} .
\]

\begin{theorem}[Simply supported boundary conditions]\label{thm:ss-complexity}
Consider the simply supported biharmonic equation~\eqref{eq:ss-pde} on a fixed uniform mesh with \(N=2^m-1\) interior points per spatial direction.  The QSVT-VTAA-based finite-difference method prepares, up to state error $\varepsilon$, a normalized augmented state proportional to \((\mathbf u,\mathbf v)\) with gate complexity
\[
\Oht\!\left(dN^2\log\frac{1}{\varepsilon}\right).
\]
After amplitude-amplified extraction of the solution component, it prepares \(\ket{\mathbf u}\) with gate complexity
\[
\Oht\!\left(dN^2\log\frac{1}{\varepsilon}\sqrt{1+\rho_N^2}
\right).
\]
\end{theorem}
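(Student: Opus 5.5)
The plan follows the proof of \Cref{thm:periodic-complexity}: bound the condition number of $D_N$ restricted to the invariant subspace $\mathbb C^2\otimes\mathcal H_{\rm phys}$, combine it with \Cref{lem:sim-sup-block-encoding}, and invoke \Cref{thm:qsvt-vtaa}.

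\textbf{Step 1: condition number.} On $\mathbb C^2\otimes\mathcal H_{\rm phys}$, the block $D^{(d)}$ is diagonal. Its entries are sums of $\frac{4}{h^2}\sin^2\frac{k_\ell\pi}{2(N+1)}$ with $1\le k_\ell\le N$ and $h=1/(N+1)$. Hence
\[
\lambda_{\min}(D^{(d)})=4d(N+1)^2\sin^2\tfrac{\pi}{2(N+1)}\ge 4d ,
\]
using $\sin x\ge 2x/\pi$. Similarly, $\lambda_{\max}(D^{(d)})\le 4d(N+1)^2$.

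The restricted $D_N$ is block upper triangular, so its inverse is
\[
D_N^{-1}=\begin{bmatrix}(D^{(d)})^{-1} & (D^{(d)})^{-2}\\ 0 & (D^{(d)})^{-1}\end{bmatrix}.
\]
This gives $\norm{D_N^{-1}}\le 2/\lambda_{\min}+1/\lambda_{\min}^2=O(1/d)$. Also $\norm{D_N}\le\lambda_{\max}+1=O(dN^2)$. Therefore $\kappa(D_N)=O(N^2)$. The factor $d$ cancels here, which is why the bound is $dN^2$ rather than $d^2N^2$ as in the periodic case.

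One subtlety is that the padded zero modes make $D_N$ singular on the full space. I will handle this by appealing to \Cref{ass:nullspace} and the invariance discussion: the transformed right-hand side lies in $\mathbb C^2\otimes\mathcal H_{\rm phys}$, so \Cref{thm:qsvt-vtaa} applies with $\mathcal S$ equal to this subspace. I also need the norm appearing in $\beta_{D_N}$ to be the restricted norm. \Cref{lem:sim-sup-block-encoding} already computes it with the largest physical mode, so this is consistent.

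\textbf{Step 2: augmented state.} \Cref{lem:sim-sup-block-encoding} gives $\beta_{D_N}=O(1)$ and $T_{D_N}=O(d\log N)$. Plugging these into \Cref{thm:qsvt-vtaa} yields $\Oht(\kappa\,T_{D_N}\log(1/\varepsilon))=\Oht(dN^2\log(1/\varepsilon))$ for the state proportional to $(\widehat{\mathbf u},\widehat{\mathbf v})$.

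The quantum sine transform applied to each coordinate costs $O(d\log^2 N)$. It is applied once to the input and once (inverted) to the output, so it is additive and absorbed into $\Oht$. Because the DST is unitary and acts identically on both components, it maps the state to one proportional to $(\mathbf u,\mathbf v)$ and preserves $\rho_N=\norm{\widehat{\mathbf v}}/\norm{\widehat{\mathbf u}}$.

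\textbf{Step 3: extracting $\ket{\mathbf u}$.} Measuring the first-block qubit succeeds with probability $1/(1+\rho_N^2)$. Amplitude amplification therefore needs $O(\sqrt{1+\rho_N^2})$ rounds of the preparation circuit and its inverse, which gives the second bound.

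The main obstacle is the rigorous restriction to the physical subspace. This is needed both for the singular-value discrimination, so that the zero padded eigenvalues are never populated, and for matching the block-encoding normalization to the restricted norm. I would argue it directly from the invariance of $\mathbb C^2\otimes\mathcal H_{\rm phys}$ under $D_N$ and its adjoint, which holds because $D^{(d)}$ is diagonal.
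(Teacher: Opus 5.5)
Your proposal is correct and follows the paper's proof. Like the paper, you bound $\kappa(D_N)=O(N^2)$ on the invariant physical subspace using $\sigma_{\min}(D^{(d)})=\Theta(d)$, $\sigma_{\max}(D^{(d)})=\Theta(dN^2)$ and the block-triangular structure, then combine $\beta_{D_N}=O(1)$ and $T_{D_N}=O(d\log N)$ with \Cref{thm:qsvt-vtaa}, and pay a factor $\sqrt{1+\rho_N^2}$ for amplitude-amplified extraction, with the sine-transform cost absorbed into $\Oht$. Your explicit inverse and the bound $\lambda_{\min}\ge 4d$ via $\sin x\ge 2x/\pi$ only make the paper's $\Theta(\cdot)$ estimates quantitative.
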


\begin{proof}
On the non-padded interior subspace, \(D^{(d)}\) is invertible and satisfies \(\alpha_{D^{(d)}}=O(dN^2)\) and \(\norm{D^{(d)}}=O(dN^2)\).  The augmented matrix \(D_N\) has the same block form as \(P_N\). It is worth noticing that when we consider the matrix $D^{(d)}$ on the non-padded subspace, its maximum and minimum singular values have the following estimates:
\begin{align*}
    \sigma_{min}(D^{(d)}) = \Theta(d),\qquad \sigma_{max} = \Theta(dN^2).
\end{align*}
Hence \(\kappa(D_N)=O(N^2)\). 
By Lemma \ref{lem:sim-sup-block-encoding},
\[
\beta_{D_N}=O(1),
\qquad
T_{D_N}=O(d\log N).
\]
Therefore, Lemma \ref{thm:qsvt-vtaa} gives
\[
\Oht\!\left(
\beta_{D_N}\kappa(D_N)T_{D_N}\log\frac{1}{\varepsilon}
\right)
=
\Oht(dN^2\log\frac{1}{\varepsilon})
\]
gates for preparing the normalized augmented solution state.

The probability of selecting the
\(\mathbf u\)-component is
\[
\frac{\norm{\mathbf u}^2}
{\norm{\mathbf u}^2+
 \norm{\mathbf v}^2}
=
\frac{1}{1+\rho_N^2}.
\]
Amplitude amplification therefore introduces an additional factor
\(O(\sqrt{1+\rho_N^2})\). The quantum discrete sine transform and its inverse
have polylogarithmic gate cost and are absorbed into the
\(\Oht\)-notation.
\end{proof}

\begin{remark}[Behavior of \(\rho_N\) and discretization error]\label{rem:ss-rho}
For a fixed continuous solution with \(u\in C^2(\overline\Omega)\), compatible simply supported boundary data, and \(\norm{u}_{L^2}\ne0\), the auxiliary field \(v=-\Delta u\) is fixed and \(\rho_N=O(1)\) as \(h\to0\).  In the discrete worst case, the largest eigenvalue of the Dirichlet Laplacian is \(O(N^2)\), so \(\rho_N=O(N^2)\).

If the exact solution satisfies \(u\in C^6(\overline\Omega)\) and the boundary data satisfy the compatibility conditions needed for the standard second-order finite-difference error estimate, then the discretization error is \(O(h^2)\).  Taking \(N=O(\varepsilon^{-1/2})\) in \Cref{thm:ss-complexity} gives \(\Oht(d\varepsilon^{-1})\) in the fixed-smooth-solution regime and \(\Oht(d\varepsilon^{-2})\) in the discrete worst case.
\end{remark}

\begin{remark}[State-output advantage regime]\label{rem:ss-advantage}
Classical direct-transform or multigrid-type solvers have at least near-linear cost \(\Oht(N^d)\) in the number of unknowns when the full grid vector is produced.  In the fixed-smooth-solution regime \(\rho_N=O(1)\), \Cref{thm:ss-complexity} gives \(\Oht(dN^2)\), so the state-output comparison is favorable in the power of \(N\) for integer dimensions \(d\ge3\).  In the arbitrary-discrete-data worst case \(\rho_N=O(N^2)\), the bound \(\Oht(dN^4)\) gives the threshold \(d\ge5\).  These are state-output comparisons and do not include full classical reconstruction from \(\ket{\mathbf u}\).
\end{remark}

\begin{remark}[Nonhomogeneous boundary conditions]\label{rem:ss-nonhomogeneous}
The same transform-based construction applies to inhomogeneous simply supported data after moving the boundary contributions of \(u\) and \(v=-\Delta u\) to the right-hand side.  The matrix diagonalized by the DST is unchanged; only the prepared source state is modified.
\end{remark}

\section{Dirichlet--Neumann boundary conditions}\label{sec:dn}
In this section, we consider the biharmonic equation with general Dirichlet-Neumann boundary conditions
\begin{align}\label{eq:dn-biharmonic}
\begin{cases}
\Delta^{2}u = f,\ x\in[-1,1]^d,\\
u\vert_{\partial\Omega} = g_1,\ \frac{\partial u}{\partial n} = g_2.
\end{cases}
\end{align}
Unlike in the periodic and simply supported cases, Eq.~\eqref{eq:dn-biharmonic} does not split into two decoupled Poisson equations with known boundary data for the auxiliary variable. We first use the finite difference method to directly discretize the biharmonic operator $\Delta^2$.
\subsection{Finite difference method}
We first focus on the two-dimensional case, which has been fully discussed by P.~E.~Bj{\o}rstad in his work \cite{bjorstad1983fast}. Bj{\o}rstad's work employs a 13-point stencil to discretize the two-dimensional biharmonic operator. The resulting discretized matrix operator is formulated as a combination of several discrete Laplace operators $L^{(1)}_h$, which was defined in Eq.~\eqref{laplace}, and correction terms, and the system is subsequently solved using the Fast Sine Transform (FST) classically.  Inspired by this approach, we propose a quantum algorithm for solving $d$-dimensional biharmonic equations with general boundary conditions on a quantum computer. We begin by a brief review about the methodology for treating discrete operators as presented in Bj{\o}rstad's work.

Let $N = 2^n$ and set a uniform mesh with mesh size $h_x = h_y = \frac{2}{2^n+1}$ on square $[-1,1]^2$. Consider the following 13-point discretization to approximate the biharmonic operator. 
\begin{align}\label{eq:13stencil}
\Delta_{13}^{2} u
\;&\equiv\;
\frac{1}{h^{4}}
\begin{bmatrix}
  &   & 1   &   &   \\
  & 2 & -8  & 2 &   \\
1 & -8 & 20 & -8 & 1 \\
  & 2 & -8  & 2 &   \\
  &   & 1   &   &   
\end{bmatrix}u
\\&=
\frac{h^{2}}{6}
\left(
D_{1}^{6}
+
D_{1}^{4} D_{2}^{2}
+
D_{1}^{2} D_{2}^{4}
+
D_{2}^{6}
\right) u \notag \\
&\hspace{1cm}+\Delta^{2} u
+
\mathcal{O}(h^{4}) .
\end{align}
$D_{k}^{l}$ denotes the $l^{\text{th}}$-order partial derivative in coordinate direction $k$, with $k=1,2$. 

This stencil is only well defined for gridpoints $P$ having all their (nearest) neighbors in the interior of $\Omega$. To approximate the $4^{\text{th}}$-order derivative on points with a neighbor $Q\in\partial\Omega$, we use a one-dimensional five-point stencil in the normal direction and eliminate the ghost point through the boundary condition. Taking $\partial^4/\partial x^4$ as an example, for point $P$ next to the left boundary, denoted as $u_{1,y}$, consider the central scheme to approximate the Neumann boundary condition: 
\[
\frac{\partial u}{\partial x} = \frac{u_{1,y} - u_{-1,y}}{2h} + O(h^{2}).
\]
This gives an $O(h^{3})$ approximation of $u_{-1,y}$. Thus we have the following stencil for point $P$ only next to the left boundary:
\begin{align}
\Delta^2_q(P)& \equiv \frac{1}{h^4} 
\begin{bmatrix}
    & 1   &   &   \\
  2 & -8  & 2 &   \\
 -8 & 21 & -8 & 1 \\
   2 & -8  & 2 &   \\
     & 1   &   &   
\end{bmatrix}
u(P) + 2\frac{u_n(Q)}{h^3}\notag \\
& = \Delta^2 u(P) + O(h^{-1}).
\end{align}
This boundary stencil is not a pointwise consistent approximation of the fourth-order derivative near the boundary. However, Bramble \cite{bramble1966second} proved that in the 2D setting, although such discretization scheme suffers from local inconsistency at specific nodes, the resulting numerical solution still exhibits global second-order convergence. The $L_2$ norm of error $\Vert u-u_{h}\Vert = O(h^{2})$. 

Writing the scaled finite-difference matrix for this boundary treatment as $A_h^{\rm rough}$, the discrete problem takes the form
\[
A_h^{\rm rough}\,\mathbf u_h=\mathbf f_h .
\]
Define the boundary selector
\[
U=[e_1,e_N],
\]
where $e_i$ is the $i$th column of the $N\times N$ identity matrix.  The scaled matrix $A_h^{\rm rough}$ can be reformulated as
\begin{align*}
A_h^{\rm rough}&=\bigl(I\otimes L_h^{(1)}+L_h^{(1)}\otimes I\bigr)^2\\
&\quad +2h^{-4}UU^{\mathsf T}\otimes I+2h^{-4}I\otimes UU^{\mathsf T}.
\end{align*}
Equivalently,
\begin{align}
A_h^{\rm rough}&=\bigl((L_h^{(1)})^2+2h^{-4}UU^{\mathsf T}\bigr)\otimes I\notag\\
&\quad +I\otimes\bigl((L_h^{(1)})^2+2h^{-4}UU^{\mathsf T}\bigr)\notag\\
&\quad +2L_h^{(1)}\otimes L_h^{(1)} .
\end{align}
The first two terms discretize the pure fourth derivatives, and the last term discretizes the mixed derivative.  The matrix $UU^{\mathsf T}$ modifies only the first and last interior grid points in one dimension.  A direct $d$-dimensional analogue is obtained by using \(\mathcal L_h^{(d)}=\sum_{\ell=1}^d I^{\otimes(\ell-1)}\otimes L_h^{(1)}\otimes I^{\otimes(d-\ell)}\):
\begin{align}
A_h^{\rm rough}&=\bigl(\mathcal L_h^{(d)}\bigr)^2
+2h^{-4}\sum_{\ell=1}^d B_\ell,\label{eq:d-direct-rough}\\
B_\ell&=I^{\otimes(\ell-1)}\otimes UU^{\mathsf T}\otimes I^{\otimes(d-\ell)},
\end{align}
For $d\geq3$, this locally inconsistent boundary treatment does not directly provide a second-order global error estimate.  We therefore replace the boundary treatment  by a consistent one-sided approximation of the fourth derivative while keeping the interior tensor-product structure unchanged.  At the matrix level, this amounts to changing only the boundary modification matrix $UU^{\mathsf T}$.

Consider the following second-order one-sided approximation near a boundary:
\begin{align}\label{eq:near-boundary}
u^{(4)}(h)&=\frac{1}{h^4}\Bigl(
-\frac{113}{12}u_0+16u_1-9u_2 \notag\\
&\qquad +\frac83u_3-\frac14u_4-5h\,u'(0)
\Bigr)+O(h^2).
\end{align}
Denote the new discretization matrix by \(\widetilde A\).  With
\[
\mathcal L_h^{(d)}=
\sum_{\ell=1}^d I^{\otimes(\ell-1)}\otimes L_h^{(1)}\otimes I^{\otimes(d-\ell)},
\]
it can be written compactly as
\begin{align}\label{eq:atilde-def}
\widetilde A
&=\bigl(\mathcal L_h^{(d)}\bigr)^2
+\sum_{\ell=1}^d I^{\otimes(\ell-1)}\otimes M\otimes I^{\otimes(d-\ell)} .
\end{align}
Here
\begin{align}\label{eq:M-matrix}
M=\frac{1}{h^4}
\begin{bmatrix}
11&-5&5/3&-1/4&0&\cdots&0\\
0&0&0&0&0&\cdots&0\\
\vdots& & & & &\ddots&\vdots\\
0&\cdots&0&-1/4&5/3&-5&11
\end{bmatrix}.
\end{align}
After moving the known boundary contributions to the right-hand side, the discrete problem is
\begin{align}\label{eq:dn-system}
\widetilde A\,\mathbf u_h=\widetilde{\mathbf f}_h,
\end{align}
where \(\widetilde{\mathbf f}_h\) depends on \(f\), \(g_1\), and \(g_2\).  The scheme is consistent with the original PDE on every grid point.
\subsection{Convergence analysis}
We record the stability argument used for the direct boundary-corrected discretization.  The proof is included because this discretization is not simply the square of a symmetric Dirichlet Laplacian after the boundary correction is inserted.

\begin{lemma}\label{lem:singular}
Let \(B\in\mathbb R^{N\times N}\) and \(H(B)=(B+B^{\mathsf T})/2\).  If \(H(B)\succeq \alpha I\) with \(\alpha>0\), then \(B\) is invertible and \(\sigma_{\min}(B)\ge \alpha\).
\end{lemma}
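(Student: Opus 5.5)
The plan is to use the real quadratic form of \(B\) and connect it to \(\norm{Bx}\) through Cauchy--Schwarz. Fix any \(x\in\mathbb R^N\). Because \(x^{\mathsf T}Bx\) is a scalar, it equals its own transpose \(x^{\mathsf T}B^{\mathsf T}x\). This gives the identity
\[
x^{\mathsf T}Bx=\tfrac12\bigl(x^{\mathsf T}Bx+x^{\mathsf T}B^{\mathsf T}x\bigr)=x^{\mathsf T}H(B)x .
\]
The hypothesis \(H(B)\succeq\alpha I\) then yields \(x^{\mathsf T}Bx\ge\alpha\norm{x}^2\).

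Next we bound the left-hand side from above with Cauchy--Schwarz: \(x^{\mathsf T}Bx\le\norm{x}\,\norm{Bx}\). Combining the two inequalities gives \(\alpha\norm{x}^2\le\norm{x}\,\norm{Bx}\). For \(x\neq0\) we can divide by \(\norm{x}\) to obtain
\[
\norm{Bx}\ge\alpha\norm{x}.
\]
Two consequences follow directly. First, \(Bx=0\) forces \(x=0\), so \(B\) is injective; since \(B\) is a square matrix, it is therefore invertible. Second, the variational characterization \(\sigma_{\min}(B)=\min_{\norm{x}=1}\norm{Bx}\) gives \(\sigma_{\min}(B)\ge\alpha\). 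Equivalently, \(\norm{B^{-1}}\le\alpha^{-1}\).

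There is no real obstacle in this argument. The only point needing care is the symmetrization identity, which depends on \(B\) being real. If complex matrices were allowed, the same proof would go through with \(\operatorname{Re}(x^{*}Bx)=x^{*}H(B)x\), where \(H(B)=(B+B^{*})/2\), together with the bound \(\operatorname{Re}(x^{*}Bx)\le|x^{*}Bx|\le\norm{x}\,\norm{Bx}\).

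This is the form in which the lemma will be applied later to \(\widetilde A\). There, a mesh-independent lower bound on the symmetric part of \(\widetilde A\) translates directly into a bound on \(\norm{\widetilde A^{-1}}\), which in turn controls \(\kappa(\widetilde A)\).
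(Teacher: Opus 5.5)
Your proof is correct and follows the paper's argument essentially verbatim: the chain \(\alpha\norm{x}^2\le x^{\mathsf T}H(B)x=x^{\mathsf T}Bx\le\norm{Bx}\,\norm{x}\), followed by division by \(\norm{x}\). You also spell out two steps the paper leaves implicit: that injectivity of the square matrix \(B\) gives invertibility, and that \(\sigma_{\min}(B)=\min_{\norm{x}=1}\norm{Bx}\) gives the singular-value bound.
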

\begin{proof}
For any nonzero real vector \(x\),
\[
\alpha\norm{x}_2^2\le x^{\mathsf T}H(B)x=x^{\mathsf T}Bx\le \norm{Bx}_2\norm{x}_2.
\]
Dividing by \(\norm{x}_2\) gives \(\norm{Bx}_2\ge \alpha\norm{x}_2\), and hence the claimed singular-value bound.
\end{proof}

\begin{lemma}\label{lem:stable}
Let \(M\) be the boundary-correction matrix in \eqref{eq:M-matrix}.  Let \(T_N=h^2L_h^{(1)}\) and \(M_0=h^4M\) be the corresponding dimensionless matrices.  There is a constant \(c>0\), independent of \(N\), such that
\[
H\!\left(T_N^2+M_0\right)\succeq c T_N^2 .
\]
Consequently, on a fixed interval there is a constant \(c_0>0\), independent of \(h\), such that
\[
H\!\left((L_h^{(1)})^2+M\right)\succeq c_0 I .
\]
\end{lemma}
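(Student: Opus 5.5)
The plan is to prove the equivalent statement $(1-c)\,T_N^2+H(M_0)\succeq 0$ by localizing the boundary correction, and then to rescale to obtain the $h$-dependent conclusion.

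\textbf{Localizing the correction.} By \eqref{eq:M-matrix}, $H(M_0)=(M_0+M_0^{\mathsf T})/2$ is supported on the first four and last four indices. Its upper-left block is symmetric with entries
\[
(1,1)=11,\quad (1,2)=-\tfrac52,\quad (1,3)=\tfrac56,\quad (1,4)=-\tfrac18,
\]
and zeros elsewhere; the lower-right block is its mirror image. For $N\ge 8$ the two blocks do not interact. Write the quadratic form of $T_N^2$ as
\[
x^{\mathsf T}T_N^2x=\norm{T_Nx}^2=\sum_{i=1}^N r_i(x)^2,\qquad r_i(x)=-x_{i-1}+2x_i-x_{i+1},
\]
with $x_0=x_{N+1}=0$. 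Fix a window length $k$, say $k=5$ or $6$. Drop all $r_i^2$ except $i=1,\dots,k$ and $i=N-k+1,\dots,N$. It then suffices to prove the local inequality
\[
(1-c)\sum_{i=1}^{k} r_i(x)^2+q(x_1,\dots,x_4)\ \ge\ 0\qquad\text{for all }(x_1,\dots,x_{k+1})\in\mathbb R^{k+1},
\]
where $q(x)=x^{\mathsf T}H(M_0)x$ restricted to the first block. The mirrored inequality at the right end follows by symmetry. The constant $c$ depends only on this fixed finite-dimensional problem, so it is independent of $N$.

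\textbf{The local finite-dimensional inequality.} Let $R(x)=\sum_{i=1}^k r_i^2$, a positive semidefinite form on $\mathbb R^{k+1}$. Its kernel is one-dimensional: the discrete linear functions $x_i=it$, which satisfy $r_1=\dots=r_k=0$ because $x_0=0$. On this kernel,
\[
q=t\cdot(11-10+5-1)\,t=5t^2>0.
\]
Hence $q$ is strictly positive on $\ker R$. A standard compactness argument (or a Schur-complement computation) shows that $R+\mu q\succ 0$ for a fixed $\mu>0$; note that $q$ itself is positive on $\ker R$, so the sign is favorable. Since $q$ enters with coefficient one, we also need the explicit statement that $R+q\succ 0$, which is not automatic from positivity on the kernel alone. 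I would therefore verify explicitly that the $(k+1)\times(k+1)$ matrix $G_k$ of $R+q$ is positive definite, by exact rational Cholesky or eigenvalue computation. If it is, then for the smallest eigenvalue $\theta>0$ of the pencil $(G_k,R)$ on $(\ker R)^\perp$, together with the strictly positive kernel direction, one can take $1-c$ slightly below one, giving some $c>0$. I expect this step to be the main obstacle. A window of $k=5$ may be too short, because $R$ must dominate the negative directions of $q$ coming from the $-5/2$ off-diagonal entry. If $G_5$ fails, I would enlarge $k$ to $7$ or $8$, which strengthens $R$ near the boundary, and if necessary split $(1-c)$ unevenly between near and far residuals.

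\textbf{Small $N$ and rescaling.} For the finitely many $N<N_0$ for which the windows overlap, I would check directly that $T_N^2+H(M_0)\succ 0$ and take the minimum of the resulting ratios with $T_N^2$, since $T_N$ is invertible. For the consequence, divide the inequality by $h^4$ to get $H((L_h^{(1)})^2+M)\succeq c\,(L_h^{(1)})^2$. Then use $\lambda_{\min}(L_h^{(1)})=4h^{-2}\sin^2(\pi h/4)\to\pi^2/4$ on the interval $[-1,1]$, which is bounded below uniformly in $h$. This gives $c_0=c\,\inf_h\lambda_{\min}(L_h^{(1)})^2>0$. Combined with \Cref{lem:singular}, this yields the mesh-independent stability bound.
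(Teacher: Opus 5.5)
Your architecture is the paper's: write $x^{\mathsf T}T_N^2x=\sum_i r_i(x)^2$, discard the nonnegative residuals away from the two ends, reduce to a fixed finite-dimensional positivity statement in each boundary layer, and then rescale. Your rescaling step is fine. The gap is that the only nontrivial step, positive definiteness of the local form $R+q$, is never established. As you note, positivity of $q$ on $\ker R$ gives $R+\mu q\succ 0$ only for small $\mu$. The case $\mu=1$ is deferred to a Cholesky check that you do not perform.

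Your contingency plan would also not help. Given $(x_1,\dots,x_{k+1})$, set $x_{k+2}=2x_{k+1}-x_k$, $x_{k+3}=2x_{k+2}-x_{k+1}$, and so on. This makes $r_{k+1}=r_{k+2}=\cdots=0$ and leaves $q$ unchanged, because $q$ depends only on $x_1,\dots,x_4$. Hence any negative direction of $G_k$ survives in every longer window; in fact the best constant $c$ does not depend on $k\ge 3$. Enlarging the window therefore cannot strengthen $R$ against $q$.

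The paper closes this step with a change of variables that turns the check into an explicit $4\times 4$ computation with the shortest window $k=3$. Put $d_i=-r_i$ and introduce the auxiliary variable $d_0=2x_1$. Then $x_1=\tfrac12 d_0$, $x_2=d_0+d_1$, $x_3=\tfrac32 d_0+2d_1+d_2$, $x_4=2d_0+3d_1+2d_2+d_3$. So $r_1^2+r_2^2+r_3^2=d_1^2+d_2^2+d_3^2$ and
\[
q=\tfrac54 d_0^2-\tfrac{29}{24}d_0d_1+\tfrac{7}{12}d_0d_2-\tfrac18 d_0d_3 .
\]
Thus $r_1^2+r_2^2+r_3^2+q=d^{\mathsf T}Ed$, where $E=\operatorname{diag}(0,1,1,1)+B$ and $B$ is supported on the first row and column. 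A Schur complement shows $E\succ 0$, since $\tfrac54>(\tfrac{29}{48})^2+(\tfrac{7}{24})^2+(\tfrac1{16})^2=\tfrac{1046}{2304}$. Its eigenvalues are $1,1,\tfrac98\pm\tfrac{\sqrt{1082}}{48}$.

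It follows that $r_1^2+r_2^2+r_3^2+q\ge\lambda_{\min}(E)\,(d_1^2+d_2^2+d_3^2)$. Combined with the mirrored bound at the right end, this gives the lemma with $c=\tfrac98-\tfrac{\sqrt{1082}}{48}\approx 0.44$ for every $N\ge 6$. Once any $G_k\succ 0$ is known, you do not need a pencil or kernel argument: $G_k\succeq(\lambda_{\min}(G_k)/\lambda_{\max}(R))\,R$ directly.

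Your separate handling of the finitely many small $N$ is a sensible refinement that the paper omits. It too is a finite check that you would still have to carry out.
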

\begin{proof}
The proof is local at the two boundary layers.  Write \(\mathbf d=-T_N\mathbf v\).  The interior contribution is \(v^{\mathsf T}T_N^2v=\norm{\mathbf d}_2^2\).  The symmetric boundary correction decomposes into left and right contributions supported on the first and last four grid points.  For the left boundary, introduce \(d_0=2v_1\) and \(d^{(L)}=(d_0,d_1,d_2,d_3)^{\mathsf T}\).  The boundary values \((v_1,v_2,v_3,v_4)^{\mathsf T}\) are related to \(d^{(L)}\) by the constant matrix
\[
W=\begin{bmatrix}
1/2&0&0&0\\
1&1&0&0\\
3/2&2&1&0\\
2&3&2&1
\end{bmatrix}.
\]
The left boundary quadratic form is \((d^{(L)})^{\mathsf T}B d^{(L)}\), where \(B=W^{\mathsf T}Z_{\rm left}W\) and
\[
Z_{\rm left}=\begin{bmatrix}
11&-5/2&5/6&-1/8\\
-5/2&0&0&0\\
5/6&0&0&0\\
-1/8&0&0&0
\end{bmatrix}.
\]
The right boundary has the same form after reversing the order.  The boundary-layer energy is controlled by
\begin{align*}
E&=\operatorname{diag}(0,1,1,1)+B,\\
\operatorname{spec}(E)
&=\left\{1,1,\frac{9}{8}\pm\frac{\sqrt{1082}}{48}\right\}.
\end{align*}
The smallest eigenvalue of this matrix is positive.  This gives a uniform lower bound for the two boundary-layer contributions relative to the neighboring components of \(d\).  Combining the two boundary layers with the positive interior contribution gives \(H(T_N^2+M_0)\succeq cT_N^2\).  Since the smallest eigenvalue of the scaled Dirichlet Laplacian \(L_h^{(1)}=h^{-2}T_N\) is bounded below on a fixed interval, the stated bound for \((L_h^{(1)})^2+M\) follows.
\end{proof}
\begin{theorem}[Second-order convergence of the direct discretization]\label{thm:dn-discretization}
Assume that the exact solution of \eqref{eq:dn-biharmonic} is sufficiently smooth, for instance \(u\in C^6(\overline\Omega)\), and that the Dirichlet and Neumann data satisfy the usual compatibility conditions at edges and corners.  Then the boundary-corrected finite-difference system \eqref{eq:dn-system} is stable and second-order consistent, and the discrete solution satisfies
\[
\frac{\norm{\mathbf u_I-\mathbf u_{h,I}}_2}{\norm{\mathbf u_I}}=O(h^2)
\]
up to the standard equivalence between the grid norm and the continuous \(L^2\) norm.
\end{theorem}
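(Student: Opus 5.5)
The argument follows the Lax pattern: truncation error, then a uniform bound on $\norm{\widetilde A^{-1}}_2$, then their product. Let $\mathbf u_I$ be the exact solution sampled at interior nodes and set $\mathbf e=\mathbf u_I-\mathbf u_{h,I}$. Then
\[
\widetilde A\,\mathbf e=\boldsymbol\tau,\qquad
\boldsymbol\tau:=\widetilde A\,\mathbf u_I-\widetilde{\mathbf f}_h ,
\]
and it suffices to prove two facts. First, $\norm{\boldsymbol\tau}_{h}=O(h^2)$ in the scaled grid norm $\norm{\cdot}_h=h^{d/2}\norm{\cdot}_2$. Second, $\norm{\widetilde A^{-1}}_2\le C$ with $C$ independent of $h$. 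Combining them gives $\norm{\mathbf e}_h=O(h^2)$. Since $\norm{\mathbf u_I}_h\to\norm{u}_{L^2}\neq 0$, the relative bound in the statement follows; the $h^{d/2}$ factors cancel in the ratio.

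\textbf{Consistency.} I expand $(\mathcal L_h^{(d)})^2$ row by row. At nodes at distance $\ge 2h$ from every face, each term $L_h^{(1)}\otimes L_h^{(1)}$ and $(L_h^{(1)})^2$ is the standard centered stencil. Taylor expansion with $u\in C^6$ gives $\Delta^2u+O(h^2)$.

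At a node adjacent to a face in direction $\ell$, three pieces need checking.
\begin{itemize}
\item The mixed terms $L_h^{(1)}\otimes L_h^{(1)}$ involve only Dirichlet data in direction $\ell$. After moving these to the right-hand side, they remain $O(h^2)$ accurate for $\partial_k^2\partial_\ell^2u$.
\item The pure term $(L_h^{(1)})^2$ in direction $\ell$ uses the reflected ghost value. Adding the row of $M$ converts it exactly into the one-sided formula \eqref{eq:near-boundary}, with $u_0=g_1$ and $u'(0)=\pm g_2$ absorbed into $\widetilde{\mathbf f}_h$. This is a finite check that the coefficients $11,-5,5/3,-1/4$ plus the Laplacian-squared row reproduce $-\tfrac{113}{12},16,-9,\tfrac83,-\tfrac14$.
\item By Taylor expansion of \eqref{eq:near-boundary} to sixth order, that formula is itself $O(h^2)$.
\end{itemize}
At edges and corners, several directions are corrected simultaneously and independently, because the correction is a sum over $\ell$ of one-directional operators. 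The compatibility assumptions ensure that the data $g_1,g_2$ are consistent there. Hence $\max_i|\tau_i|=O(h^2)$, so $\norm{\boldsymbol\tau}_h=O(h^2)$.

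\textbf{Stability.} I apply Lemma~\ref{lem:singular} to $\widetilde A$. Writing $A_\ell=I^{\otimes(\ell-1)}\otimes L_h^{(1)}\otimes I^{\otimes(d-\ell)}$, we have
\[
\widetilde A=\sum_\ell\Bigl(A_\ell^2+I\otimes\cdots\otimes M\otimes\cdots\otimes I\Bigr)+\sum_{k\ne \ell}A_kA_\ell .
\]
The cross terms $A_kA_\ell=A_\ell A_k$ are Kronecker products of commuting symmetric positive definite factors, so they are symmetric positive semidefinite. Each diagonal block has symmetric part $I\otimes\cdots\otimes H\bigl((L_h^{(1)})^2+M\bigr)\otimes\cdots\otimes I$, which is $\succeq c_0 I$ by Lemma~\ref{lem:stable}. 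Therefore
\[
H(\widetilde A)\succeq d\,c_0 I,
\]
and Lemma~\ref{lem:singular} gives $\norm{\widetilde A^{-1}}_2\le (dc_0)^{-1}$, uniformly in $h$.

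\textbf{Main obstacle.} Stability is nearly free given Lemma~\ref{lem:stable}. The delicate step is the boundary-row consistency. One must verify that $(L_h^{(1)})^2+M$ in the first and last rows, after the Dirichlet and Neumann data are eliminated, reproduces \eqref{eq:near-boundary} exactly, with no leftover $O(h^{-1})$ term of the kind present in $A_h^{\rm rough}$. One must also confirm that rows at the second interior node, which see the boundary value only through the centered stencil, involve Dirichlet data alone.

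I would do this by writing out the first two rows of $T_N^2$ explicitly, $(5,-4,1,0,\dots)$ and $(-4,6,-4,1,\dots)$. Adding $h^4M$ to the first row gives $(16,-9,8/3,-1/4)$. This matches the interior coefficients of \eqref{eq:near-boundary}, and the $u_0$ and $u'(0)$ terms go to the right-hand side.

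The uniform max-norm truncation bound then suffices. The stability estimate is an $\ell^2$ bound, so no special weighting of the boundary layer is needed, and second order follows directly.
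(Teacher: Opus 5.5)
Your proposal is correct and follows the same route as the paper. The paper's argument is second-order consistency of the interior stencil and of the one-sided closure \eqref{eq:near-boundary}, plus a mesh-independent stability bound from Lemma~\ref{lem:singular} and Lemma~\ref{lem:stable} extended to $d$ dimensions by tensor-product summation, with consistency and stability together giving the $O(h^2)$ estimate. You fill in details the paper leaves implicit, and they are right: the first row of $T_N^2+h^4M$ is indeed $(16,-9,8/3,-1/4)$, and because the cross terms $A_kA_\ell$ are symmetric positive semidefinite, $H(\widetilde A)\succeq d\,c_0 I$ and hence $\norm{\widetilde A^{-1}}_2\le (dc_0)^{-1}$.
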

\begin{proof}
The interior stencil and the boundary formula \eqref{eq:near-boundary} are second-order consistent under the stated smoothness assumptions.  Lemma \ref{lem:singular} and Lemma \ref{lem:stable} imply a mesh-independent stability bound for the one-dimensional corrected fourth-derivative block and, by tensor-product summation, for the \(d\)-dimensional operator \(\widetilde A\).  Consistency plus stability gives the stated second-order convergence estimate.
\end{proof}

\subsection{Block-encoding and circuits}
The algorithm uses one QSVT-VTAA linear-system solve.  We therefore focus on constructing a block-encoding of $\widetilde A$.  The construction combines three ingredients: a block-encoding of $L_h^{(1)}$, a block-encoding of the boundary-correction matrix $M$, and an LCU construction for matrices of the form $T=Y\otimes I^{\otimes(d-1)}+\cdots+I^{\otimes(d-1)}\otimes Y$, assuming a block-encoding $U_Y$ of $Y$.

\begin{figure}[t]
    \centering
    \maybeincludegraphics[width=1.0\linewidth]{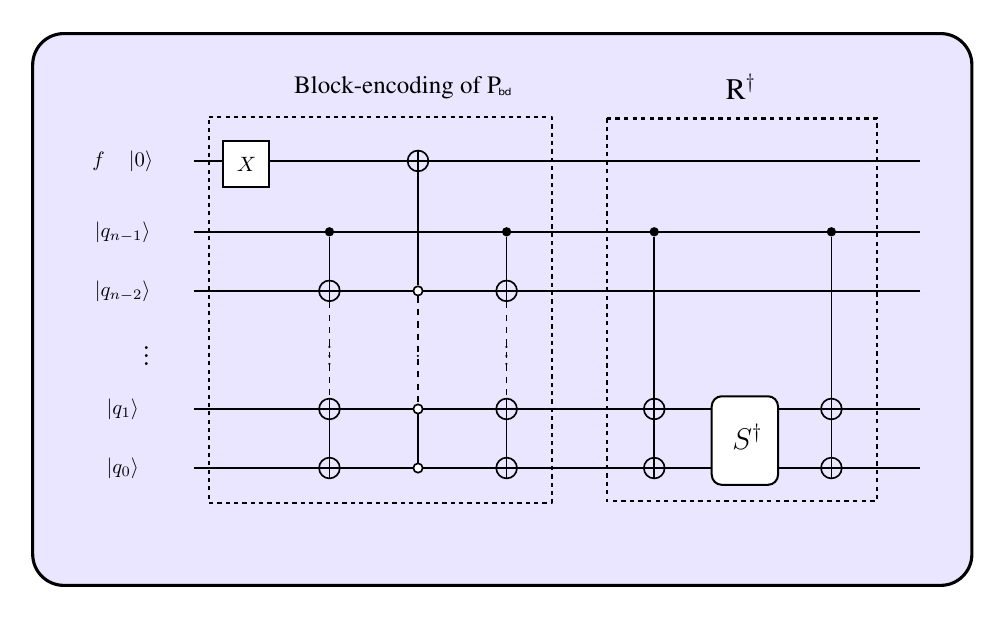}
    \caption{Optimal block-encoding of matrix $M$}
    \label{fig:be_m}
\end{figure}

The block-encoding of $L_h^{(1)}$ was given in Eq.~\eqref{eq:lap-shift-form}. We next present an exact, norm-matched block-encoding of the matrix $M$ with factor $\alpha = \|M\|_2$ and number of ancilla qubits $a = 1$.

Define the normalized two-qubit state associated with the integer-scaled first boundary row of \(h^4M\),
\[
\ket{u_M}=\frac{132\ket{00}-60\ket{01}+20\ket{10}-3\ket{11}}{\sqrt{21433}}.
\]
The integer scaling corresponds to multiplying the row \((11,-5,5/3,-1/4)\) by \(12\); the normalization constant and the global factor are absorbed into the block-encoding normalization \(\alpha_M=\norm{M}\).  By symmetry, \((X\otimes X)\ket{u_M}\) encodes the last nonzero boundary row.  Define
\begin{align*}
\ket{\phi_0}&=\ket{0^{n-2}}\ket{u_M},\\
\ket{\phi_1}&=\ket{1^{n-2}}(X\otimes X)\ket{u_M}.
\end{align*}
Let \(R\) be any unitary extension satisfying \(R\ket{0^n}=\ket{\phi_0}\) and \(R\ket{1^n}=\ket{\phi_1}\), and let
\[
P_{\rm bd}=\ket{0^n}\bra{0^n}+\ket{1^n}\bra{1^n}.
\]
Then the circuit in \Cref{fig:be_m}, with the boundary projector implemented by one block-encoding ancilla, has top-left block \(M/\alpha_M\).  Therefore it is an exact \((\alpha_M,1,0)\)-block-encoding of \(M\), with gate complexity \(O(\log N)\) plus the constant-size state preparation for \(\ket{u_M}\).  The state preparation can be implemented by the two controlled rotations shown in \Cref{s}, with
\begin{align*}
\theta&=2\arctan\frac{\sqrt{409}}{12\sqrt{146}},\\
\phi_0&=-2\arctan\frac{5}{11},\qquad
\phi_1=-2\arctan\frac{3}{20}.
\end{align*}

\begin{figure}[h]
    \centering
    \maybeincludegraphics[width=1\linewidth]{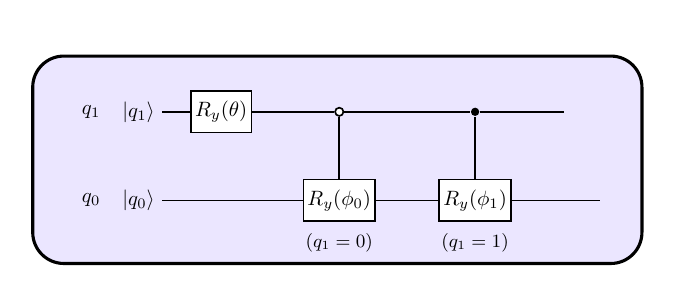}
    \caption{Explicit constant-size circuit for preparing the boundary-row state \(\ket{u_M}\).}
    \label{s}
\end{figure}

Finally, we give the block-encoding of matrix $T$ with structure 
\[
T = Y\otimes I^{\otimes(d-1)}+ \cdots+I^{\otimes (d-1)}\otimes Y
\]
with the block-encoding of $Y$, denoted as $U_Y$, prepared. Circuits are shown in \cref{p_be}.
\begin{figure}[h]
    \centering
    \maybeincludegraphics[width=1.0\linewidth]{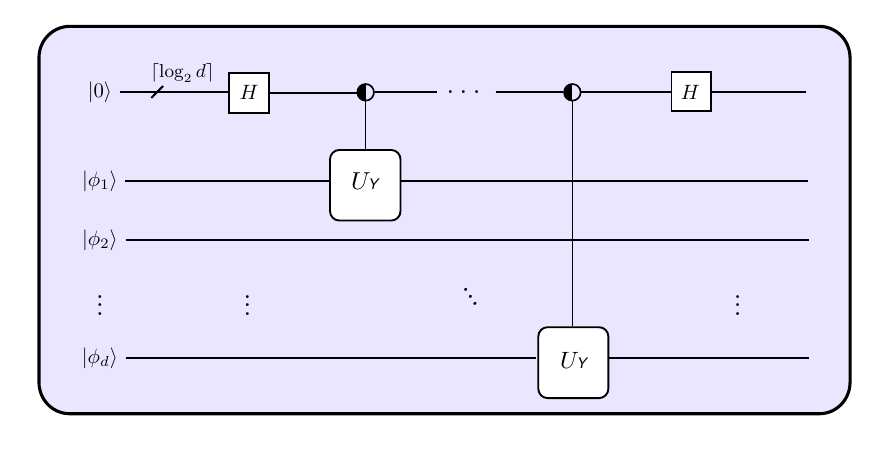}
    \caption{Block-encoding of matrix $T$}
    \label{p_be}
\end{figure}
Combining these ingredients by the block-encoding arithmetic gives a block-encoding of $\widetilde A$.

\begin{lemma}[Block-encoding of the direct discretization matrix]
\label{lem:d-d-block-encoding}
The matrix \(\widetilde A\) in \eqref{eq:atilde-def} admits an exact
block-encoding with normalization
\[
\alpha_{\widetilde A}
=
16d^2(N+1)^4 + \frac{\sqrt{21433}}{12}d(N+1)^4.
\]
Under a direct implementation of the directional SELECT operation,
one use of this block-encoding has gate cost
\[
T_{\widetilde A}=O(d\log N).
\]
Moreover, its normalization overhead satisfies
\[
\beta_{\widetilde A}
:=
\frac{\alpha_{\widetilde A}}{\norm{\widetilde A}}
=
O(1).
\]
\end{lemma}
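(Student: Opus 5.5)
The block-encoding is assembled in three layers, using \Cref{lem:block-arithmetic} and the LCU rules of \Cref{subsec:lcu}, and then the normalization is compared with $\norm{\widetilde A}$.

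\emph{Squared Laplacian.} Compose two copies of the directional LCU block-encoding of $\mathcal L_h^{(d)}$ from \Cref{lem:lap-be}. Each copy has normalization $\alpha_d = d\cdot 4/h^2$: the shift form \eqref{eq:lap-shift-form} gives LCU weight $2+\tfrac12\cdot 2+\tfrac12\cdot 2=4$ per direction, times $h^{-2}$. By the product rule of \Cref{lem:block-arithmetic}, $(\mathcal L_h^{(d)})^2$ gets normalization $16d^2h^{-4}$ at cost $O(d\log N)$. Since this is the Dirichlet–Neumann section, $h=2/(N+1)$, so $h^{-4}=(N+1)^4/16$. The constants therefore need bookkeeping: I would either rescale the domain or fix the weighting convention so that the factor reads $16d^2(N+1)^4$ as stated.

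\emph{Boundary correction.} Use the construction of \Cref{fig:be_m}, an exact $(\alpha_M,1,0)$-block-encoding of $M$ at cost $O(\log N)$. The value of $\alpha_M$ comes from its row structure. $M$ has exactly two nonzero rows, supported on disjoint columns, each equal to $h^{-4}\sqrt{21433}/12$ times a unit vector. So $MM^{\mathsf T}$ is diagonal with those squared row norms, giving $\norm{M}=\sqrt{21433}\,h^{-4}/12$. The directional-sum circuit of \Cref{p_be} then yields $\sum_\ell I\otimes M\otimes I$ with normalization $d\alpha_M$, again with the same scaling convention for $h$. Adding the two pieces by LCU (the sum rule of \Cref{lem:block-arithmetic}, with a one-qubit PREP of weights proportional to $16d^2$ and $d\sqrt{21433}/12$) gives the stated $\alpha_{\widetilde A}$. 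The total gate cost is $O(d\log N)+O(d\log N)+O(1)$.

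\emph{Normalization overhead.} I expect this to be the main obstacle: showing $\beta_{\widetilde A}=O(1)$ needs a lower bound $\norm{\widetilde A}\gtrsim \alpha_{\widetilde A}$. The plan is to use the symmetric part. For any unit $x$, $\norm{\widetilde A}\ge x^{\mathsf T}\widetilde A x = x^{\mathsf T}H(\widetilde A)x$. Choose $x=w^{\otimes d}$, where $w$ is the top eigenvector of $L_h^{(1)}$ (the highest sine mode). Then $x^{\mathsf T}(\mathcal L_h^{(d)})^2x=\lambda_{\max}(\mathcal L_h^{(d)})^2 = \Theta(d^2h^{-4})$. By \Cref{lem:stable}, $H(T_N^2+M_0)\succeq cT_N^2$ tensorized, and together with the positivity of cross terms $L_h^{(1)}\otimes L_h^{(1)}\succeq 0$, this gives $H(\widetilde A)\succeq c'(\mathcal L_h^{(d)})^2$. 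Hence $\norm{\widetilde A}\ge c'\Theta(d^2h^{-4})$. Since $\alpha_{\widetilde A}=\Theta(d^2h^{-4})+\Theta(dh^{-4})=\Theta(d^2h^{-4})$, the ratio is bounded uniformly in $N$ and $d$.

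The delicate point is the tensorization step. The cross terms $2\sum_{\ell<k}L_\ell L_k$ are positive semidefinite, and each directional block is handled by \Cref{lem:stable} with $c\le 1$, so $H(\widetilde A)\succeq c(\mathcal L_h^{(d)})^2$. If that turns out too crude, a fallback is to bound the boundary perturbation directly. On the high-frequency mode $w$, the entries $w_1,\dots,w_4$ are $O(N^{-3/2})$ relative to a unit vector, so $|w^{\mathsf T}M_0w|=O(N^{-3})$. This is negligible against $\lambda_{\max}(T_N)^2\approx 16$, which gives the lower bound directly.
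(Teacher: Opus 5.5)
Your proposal is correct and follows essentially the paper's route: the same LCU assembly of the composed Laplacian block-encoding (\(16d^2h^{-4}\)) with the directional sum of the norm-matched \(M\) encoding (\(d\norm{M}=d\sqrt{21433}\,h^{-4}/12\)), and the same lower bound \(\norm{\widetilde A}\ge\norm{H(\widetilde A)}\ge c\,\lambda_{\max}(\mathcal L_h^{(d)})^2\), obtained by tensorizing \Cref{lem:stable} and using \(c<1\) together with positive semidefinite cross terms (your test vector \(w^{\otimes d}\) is just a concrete way of reading off \(\norm{H(\widetilde A)}\)). The constant mismatch you flagged is genuine and lies in the paper: its proof takes \(\alpha_{L_h^{(1)}}=4(N+1)^2\), i.e.\ effectively \(h=1/(N+1)\) rather than the \(h=2/(N+1)\) set in this section, which changes the displayed constants but neither the gate cost nor \(\beta_{\widetilde A}=O(1)\).
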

\begin{proof}
The explicit block-encoding of $L^{(1)}_h$ in \eqref{eq:lap-shift-form} gives the factor $\alpha_{L^{(1)}_h}  = 4(N+1)^2$. Thus the LCU normalization of $L^{(d)}_h$ is $\alpha_{L^{(d)}_h} = 4d(N+1)^2$ and its block-encoding has gate cost $O(d\log N)$.  As shown in \Cref{fig:be_m}, the block-encoding factor $\alpha_M$ of matrix $M$ is optimal, i.e. $\alpha_M = \norm{M} = \frac{\sqrt{21433}}{12}(N+1)^4$ and its block-encoding has gate cost $O(d\log N)$. Consequently, LCU is used to construct the block-encoding of matrix $\widetilde{A}$ and the factor 
$$
\alpha_{\widetilde{A}} = 16d^2(N+1)^4 + \frac{\sqrt{21433}}{12}d(N+1)^4.
$$
The estimation of $\|\tilde{A}\|_2$ is given by
    \[
    \|\tilde{A}\|_2 = \frac{\|\tilde{A}\|_2 + \|\tilde{A}^T\|}{2}\geq \|H(\tilde{A})\|_2.
    \]
According to Lemma. \ref{lem:stable},
\begin{align*}
    H(\widetilde A) &= \sum\limits_{\ell = 1}^{d}H(L_{\ell}^{2} + M) + 2\sum\limits_{\ell<j}L_{\ell}L_{j}\\
    &\succeq c \sum_{\ell=1}^d L_{\ell}^2 + 2 \sum_{\ell < j} L_{\ell} L_j \\
&\succeq c \left( L_h^{(d)} \right)^2,
\end{align*}
with $0<c<1$. Given that the inequality $\norm{\widetilde A}\leq \alpha_{\widetilde A}$ always holds, we conclude that
  $$
  1\leq\beta_{\widetilde A}\leq\frac{1}{c}(16 + \frac{\sqrt{21433}}{12}) = O(1).
  $$
  
Regarding the gate complexity $T_{\widetilde{A}}$, based on our explicit construction of the block-encoding, it is evident that
\begin{align*}
    T_{\widetilde A} &= 2T_{L^{(d)}_h} + T_{M} + O(1) \\
    &= O(d\log N).
\end{align*}
\end{proof}

\subsection{Complexity}\label{subs:fd-complexity}
We now estimate the complexity of the direct finite-difference method for the Dirichlet--Neumann boundary problem.  The following theorem uses the QSVT-VTAA model in Lemma \ref{thm:qsvt-vtaa}.

\begin{theorem}[Direct Dirichlet--Neumann discretization]\label{thm:dn-complexity}
For the \(d\)-dimensional Dirichlet--Neumann biharmonic problem \eqref{eq:dn-biharmonic}, assume the regularity and compatibility hypotheses of \Cref{thm:dn-discretization}.  For a fixed grid with \(N\) interior points per spatial direction, the QSVT-VTAA-based direct finite-difference method prepares, up to state error $\varepsilon$, the normalized discrete solution state with worst-case gate complexity
\[
\widetilde O\!\left(dN^4\log\frac{1}{\varepsilon}\right),
\]
Since the discretization is second order, choosing \(N=O(\varepsilon^{-1/2})\) gives the end-to-end complexity
\[
\widetilde O\!\left(d\varepsilon^{-2}\right).
\]
\end{theorem}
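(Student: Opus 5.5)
We will follow the same template as in the proofs of \Cref{thm:periodic-complexity} and \Cref{thm:ss-complexity}: bound $\kappa(\widetilde A)$, read off $\beta_{\widetilde A}$ and $T_{\widetilde A}$ from \Cref{lem:d-d-block-encoding}, and insert these into \Cref{thm:qsvt-vtaa}. Here no augmented system is solved, so there is no extraction factor $\sqrt{1+\rho_N^2}$. The output of the solver is directly the normalized state $\ket{\mathbf u_h}$, and \Cref{ass:input-output} lets us take $\ket{\widetilde{\mathbf f}_h}$ as given.

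\textbf{Step 1: the condition number.} For the upper bound, use $\norm{\widetilde A}\le \alpha_{\widetilde A}=O(d^2N^4)$ from \Cref{lem:d-d-block-encoding}. A sharper estimate is $\norm{\widetilde A}\le \norm{\mathcal L_h^{(d)}}^2+\norm{M}=O(d^2N^4)$.

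For $\sigma_{\min}(\widetilde A)$, combine \Cref{lem:singular} with the Hermitian-part bound already derived in the proof of \Cref{lem:d-d-block-encoding}:
\[
H(\widetilde A)\succeq c\,\bigl(\mathcal L_h^{(d)}\bigr)^2 \succeq c\,\lambda_{\min}\bigl(\mathcal L_h^{(d)}\bigr)^2 I .
\]
Since $\lambda_{\min}(L_h^{(1)})=\Theta(1)$ on the fixed interval, we have $\lambda_{\min}(\mathcal L_h^{(d)})=\Theta(d)$, and therefore $\sigma_{\min}(\widetilde A)\ge c\,\Theta(d^2)$.

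Combining the two bounds gives $\kappa(\widetilde A)=O(N^4)$, uniformly in $d$. This is the step I expect to require the most care. The delicate point is that the Hermitian-part inequality must hold with a constant $c$ independent of both $N$ and $d$. The mixed terms $2L_\ell L_j$ are positive semidefinite because they are tensor products of positive definite factors. We then need $c\le 1$, so that $c\sum_\ell L_\ell^2+2\sum_{\ell<j}L_\ell L_j\succeq c(\sum_\ell L_\ell)^2$. Both facts follow from \Cref{lem:stable}, which I would take as given.

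\textbf{Step 2: gate count.} By \Cref{lem:d-d-block-encoding}, $\beta_{\widetilde A}=O(1)$ and $T_{\widetilde A}=O(d\log N)$. \Cref{thm:qsvt-vtaa} then gives
\[
O\!\left(\beta_{\widetilde A}\kappa(\widetilde A)\log\kappa(\widetilde A)\log\tfrac{\kappa(\widetilde A)}{\varepsilon}\,T_{\widetilde A}\right)=\widetilde O\!\left(dN^4\log\tfrac1\varepsilon\right).
\]
Here the factors $\log N$ are suppressed. Solution-independence of this bound justifies calling it worst case. Because $\widetilde A$ is nonsymmetric, we use the singular-value version of QSVT-VTAA, which \Cref{thm:qsvt-vtaa} covers.

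\textbf{Step 3: end-to-end cost.} By \Cref{thm:dn-discretization}, the relative discretization error is $O(h^2)=O(N^{-2})$, so $N=O(\varepsilon^{-1/2})$ makes it $O(\varepsilon)$. Taking the state error equal to $\varepsilon$ as well, the triangle inequality bounds the total error by $O(\varepsilon)$. Substituting $N=O(\varepsilon^{-1/2})$ yields $\widetilde O(d\,\varepsilon^{-2}\log(1/\varepsilon))=\widetilde O(d\varepsilon^{-2})$.

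One remark is needed here: relative $\ell^2$ error of the unnormalized vectors controls the distance between the corresponding normalized states up to a factor of $2$.
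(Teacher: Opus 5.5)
Your proposal is correct and follows the paper's proof essentially step for step. You get $\sigma_{\min}(\widetilde A)\ge c'\,d^2$ from \Cref{lem:singular} and \Cref{lem:stable}, bound $\norm{\widetilde A}=O(d^2N^4)$, conclude $\kappa(\widetilde A)=O(N^4)$, and then insert $\beta_{\widetilde A}=O(1)$ and $T_{\widetilde A}=O(d\log N)$ into \Cref{thm:qsvt-vtaa}; your explicit balancing of discretization and state error in Step 3 is a small addition that the paper leaves implicit. The one slip is harmless: your ``sharper'' bound should read $\norm{\mathcal L_h^{(d)}}^2+d\norm{M}$ (as in the paper), which is still $O(d^2N^4)$.
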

\begin{proof}
We firstly derive the upper bound of $\kappa(\tilde{A})$. According to Lemma \ref{lem:singular} and Lemma \ref{lem:stable}:
    \[
    \sigma_{min}(\widetilde{A})\geq Cd\lambda_{min}^2(L^{(1)}_h) + \frac{d(d-1)}{2}\lambda_{min}^2(L_h^{(1)}).
    \]
    And triangular inequality gives
    \[
    \sigma_{max}(\widetilde{A}) = \Vert \widetilde{A}\Vert_2\leq d^2\lambda_{max}^2(L_{h}^{(1)}) + d\|M\|_2.
    \]
    Thus one can concludes
    \[
    \kappa(\widetilde{A})\leq O(N^4).
    \]

By Lemma \ref{lem:d-d-block-encoding},
\[
\beta_{\widetilde{A}}=O(1),
\qquad
T_{\widetilde{A}}=O(d\log N).
\]
Therefore, Lemma \ref{thm:qsvt-vtaa} gives
\[
\Oht\!\left(
\beta_{\widetilde{A}}\kappa(\widetilde{A})T_{\widetilde{A}}
\right)
=
\Oht(dN^4)
\]
gates for preparing the normalized solution state.

\end{proof}


\begin{remark}[Classical comparison and advantage regime]
\label{rem:dn-classical-comparison}
For the nonsymmetric boundary-corrected system, we use LSQR, or equivalently
CGLS applied implicitly to the normal equations, as an unpreconditioned
classical baseline. Denote \(\operatorname{nnz}(A)\)  the number of
nonzero entries of a matrix \(A\). Since
\(\operatorname{nnz}(\widetilde A)=O(d^2N^d)\) and
\(\kappa(\widetilde A)=O(N^4)\), the corresponding classical arithmetic
complexity is \(\Oht(d^2N^{d+4})\), whereas the quantum state-preparation cost
is \(\Oht(dN^4)\). Under the input-output model of
Assumption~\ref{ass:input-output}, the quantum scaling is therefore
asymptotically favorable for every \(d\ge1\) relative to this classical
baseline. It is noted that the present work does not analyze problem-dependent preconditioning, either classical
or quantum, and therefore does not claim an unconditional quantum advantage for
the boundary-corrected system.
\end{remark}

\section{Coupled Laplace formulation}\label{sec:coupled}
Based on the condition-number and error analysis in \Cref{sec:dn}, it is evident that the discrete matrices resulting from direct discretization of the biharmonic operator typically exhibit a large condition number scaling as $O(N^4)$. Consequently, to mitigate the ill-conditioning and reduce computational complexity, we introduce a splitting formulation. Originally developed for spectral collocation methods \cite{Heinrichs1991AST,legendre_colla}, this approach transforms the biharmonic equation into a system of two coupled Laplace equations by introducing an auxiliary variable $v$. Under this framework, both spectral collocation and finite difference methods can be employed to solve the biharmonic equation.

Consider the following PDE system:
\[
\begin{cases}
\Delta^{2}u = f, & x\in[-1,1]^d,\\
u\vert_{\partial\Omega} = g_1, & \partial u/\partial n = g_2.
\end{cases}
\]
Introduce the auxiliary variable $v=-\Delta u$. The coupled system is
\[
\begin{cases}
    -\Delta u = v,\qquad -\Delta v = f,\\
    u\vert_{\partial\Omega} = g_1,\qquad \partial u/\partial n = g_2.
\end{cases}
\]
Unlike the simply supported case, the boundary information for the auxiliary variable $v$ is not explicitly available. We therefore treat the boundary values of $v$ as additional unknowns and incorporate them into an augmented linear system.

\subsection{Discrete Model}
Assume that the domain $[-1,1]^d$ is discretized by a uniform grid with $N+2$ grid points in each coordinate direction, so that $N$ interior points are retained.  Let $L_h$ denote the interior discrete positive Laplacian.  Let $\mathbf u_I$ and $\mathbf v_I$ be the vectors of interior values of $u$ and $v$, and let $\mathbf v_B$ collect the boundary values of $v$ that enter the interior Laplacian.  The coupled finite-dimensional system is
\begin{align}\label{eq:split-linear}
K\mathbf x:=
\begin{bmatrix}
    L_h&-I&0\\
    0&L_h&A_{IB}\\
    A_{\rm apm}&0&0
\end{bmatrix}
\begin{bmatrix}
    \mathbf u_I\\ \mathbf v_I\\ \mathbf v_B
\end{bmatrix}
=
\begin{bmatrix}
    \mathbf g_1\\ \mathbf f_h\\ \mathbf g_2
\end{bmatrix}.
\end{align}
Here $A_{IB}$ represents the contribution of the boundary values of $v$ to the interior Laplace equations, while $A_{\rm apm}$ maps the interior values $\mathbf u_I$ to the discrete normal derivative on the boundary.  In the homogeneous clamped case, $\mathbf g_1$ and $\mathbf g_2$ vanish after the usual boundary contributions are moved to the right-hand side.

Consider the two-dimensional biharmonic equation with homogeneous boundary condition:
\[
\begin{cases}
\Delta^{2}u = f, & x\in[-1,1]^2,\\
u\vert_{\partial\Omega} = 0, & \partial u/\partial n = 0.
\end{cases}
\]
The five-point finite-difference stencil with $N+2$ points in each dimension is applied to discretize the Laplace operator. We assume that the vectorization of the grid values $u_{ij}$ and $v_{ij}$ follows tensor-product ordering.  Define the boundary vector \(\mathbf v_B\), collecting the four edge contributions to the interior Laplacian, by
\begin{align*}
\mathbf v_B&=
\begin{bmatrix}
\mathbf v_{x^-}\\ \mathbf v_{x^+}\\ \mathbf v_{y^-}\\ \mathbf v_{y^+}
\end{bmatrix},\\
\mathbf v_{x^-}&=\{v_{0,j}\},\quad
\mathbf v_{x^+}=\{v_{N+1,j}\},\\
\mathbf v_{y^-}&=\{v_{i,0}\},\quad
\mathbf v_{y^+}=\{v_{i,N+1}\}.
\end{align*}
Then the block \(A_{IB}\) is
\begin{align}\label{eq:a_ib}
A_{IB}&=-\frac{1}{h^2}
\begin{bmatrix}E_{x^-}&E_{x^+}&E_{y^-}&E_{y^+}\end{bmatrix},\notag\\
E_{x^-}&=e_1^x\otimes I,
\qquad E_{x^+}=e_N^x\otimes I,\notag\\
E_{y^-}&=I\otimes e_1^y,
\qquad E_{y^+}=I\otimes e_N^y .
\end{align}
Here $e_1^x=e_1^y=(1,0,\ldots,0)^{\mathsf T}$ and $e_N^x=e_N^y=(0,\ldots,0,1)^{\mathsf T}$.
In order to determine the matrix block $A_{\rm apm}$, we first give the discrete differentiation formula that approximates $\partial u/\partial n$ from the interior vector $\mathbf u_I$. Since the five-point stencil is a second-order approximation of the Laplace operator, it is sufficient to consider the one-sided approximation scheme with $3$ points. We take the mesh points on $x^{-}$ line as an example:
\[
\partial_n u|_{x=0}=-u_x(0,y_j).
\]
The second-order one-sided discretization is 
\begin{align*}
-u_x(0,y_j)
&=\frac{1}{2h}\bigl(3u(0,y_j)-4u(x_1,y_j)\\
&\hspace{2.6cm}+u(x_2,y_j)\bigr)+O(h^2).
\end{align*}
where $u(0,y_j)$ is supplied by the boundary condition. Thus we use the row vectors 
\begin{align*}
d_{x^-}^{\mathsf T}&=\frac{1}{2h}(-4,1,0,\ldots,0),\\
d_{x^+}^{\mathsf T}&=\frac{1}{2h}(0,\ldots,0,1,-4).
\end{align*}
and define the matrices
\begin{align*}
    A_{x^-} = d_{x^-}^{\mathsf T}\otimes I,\qquad  A_{x^+} = d_{x^+}^{\mathsf T}\otimes I.
\end{align*}
Similarly, define
\begin{align*}
    A_{y^-} = I\otimes d_{y^-}^{\mathsf T},\qquad A_{y^+} = I\otimes d_{y^+}^{\mathsf T}.
\end{align*}
Let
\begin{align*}
    A_{\rm apm} = \begin{bmatrix}
        A_{x^-}\\
        A_{x^{+}}\\
        A_{y^-}\\
        A_{y^+}
    \end{bmatrix},
\end{align*}
which gives the second-order approximation of the boundary normal derivative
\[
A_{\rm apm}\mathbf u_I = \partial_n \mathbf u_B + O(h^2).
\]
Evidently, the two-dimensional formulation can be straightforwardly generalized to the $d$-dimensional case. The homogeneous treatment extends to inhomogeneous boundary data by moving known boundary contributions to the right-hand side. We now turn our attention to the implementation of the block-encoding for this newly formed matrix.

\subsection{Block-encoding and circuits}\label{subsec:be-circuit}
Taking the two-dimensional discrete linear system from the previous subsection as a representative example, we now construct the block-encoding for the corresponding discrete model.

First, as established in the preceding discussion \eqref{eq:lap-shift-form}, the block-encoding of $L_h$ is straightforward to implement. We therefore focus our attention on the realization of the block-encodings for matrices $A_{\rm apm}$ and $A_{IB}$.

Consistent with Eq.~\eqref{eq:a_ib}, $A_{IB}$ is a real-valued matrix of size $N^2 \times 4N$ in two dimensions and $N^d\times 2dN^{d-1}$ in $d$ dimensions. Assume for simplicity that $N = 2^m$. To ensure compatibility with quantum encoding schemes, it is necessary to pad $A_{IB}$ into a square matrix acting on $md$ qubits. Denote $S = [A_{IB},\mathbf{0}]$. We reformulate $S$ in the following sense:
\begin{align}
S&=-h^{-2}\sum_{k=1}^d\left(B_{k,0}+B_{k,1}\right),\label{eq:s}\\
B_{k,0}&=\bra{2k-2}\otimes I^{\otimes(k-1)}\otimes\ket{0}\otimes I^{\otimes(d-k)},\\
B_{k,1}&=\bra{2k-1}\otimes I^{\otimes(k-1)}\otimes\ket{2^m-1}\otimes I^{\otimes(d-k)}.
\end{align}
The strategy is to first construct block-encodings for the $2d$ matrices of the form $\bra{\alpha}\otimes I^{\otimes(k-1)}\otimes\ket{\beta}\otimes I^{\otimes(d-k)}$, and then use LCU to sum them, thereby obtaining a block-encoding of $S$. We now derive the block-encoding of square matrix $G= \bra{\alpha}\otimes I^{\otimes k-1}\otimes\ket{\beta}$.

Consider registers $R_0,R_1,\ldots,R_{k-1}$, each consisting of $m$ qubits.  Acting on a computational basis state, the matrix $G$ satisfies
\begin{align*}
G\ket{x}_{R_0}\ket{s_1,\ldots,s_{k-1}}
=\delta_{x,\alpha}\ket{s_1,\ldots,s_{k-1},\beta}.
\end{align*}
Define two unitary operators $E_\alpha,\ T_{\alpha\to \beta}$ for fixed $\alpha,\beta$, which satisfy
\[
E_\alpha\ket{\alpha} = \ket{1\cdots1},\qquad T_{\alpha\to \beta}\ket{\alpha} = \ket{\beta}.
\]
Since $\alpha$ and $\beta$ are fixed known integers, these two operators can be implemented with $X$ gates on selected qubits.  The circuit in \cref{qcy} block-encodes $G$ with one ancilla qubit and normalization $\alpha_G=1=\norm{G}$. We can consequently implement the block-encoding of $S$ by applying LCU with $1+\lceil\log_2(2d)\rceil$ ancilla qubits and factor $\alpha_S=2d/h^2$.
\begin{figure}[h]
    \centering
    \maybeincludegraphics[width=1\linewidth]{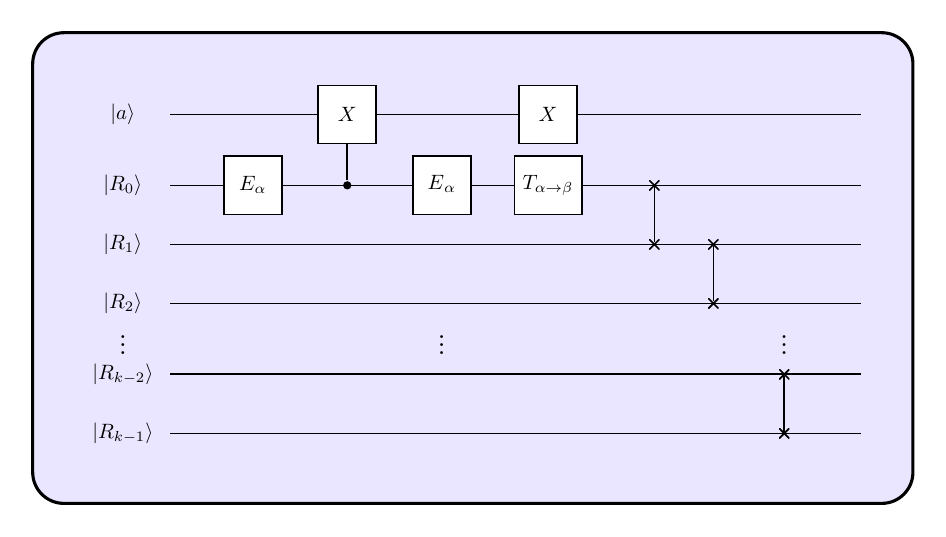}
    \caption{$U_G$: block-encoding of the matrix $G$.}
    \label{qcy}
\end{figure}

Regarding matrix $A_{\rm apm}$, it shares an analogous structure with $A_{IB}$; hence, the methodology for constructing the block-encoding of $A_{IB}$ is directly applicable to $A_{\rm apm}$. It should be noted that, unlike $A_{IB}$, $A_{\rm apm}$ requires padding with a bottom zero-block to be transformed into a square matrix. We omit further details here for the sake of brevity. Once the block components are available, another LCU step constructs a block-encoding of the full matrix $K$, analogously to Eq.~\eqref{eq:block-encoding-expand}.

\begin{lemma}[Block-encoding of the coupled augmented matrix]
\label{lem:couple-block-encoding}
The matrix \(K\) in \eqref{eq:split-linear} admits an exact
block-encoding with normalization
\[
\alpha_{K}
=
6dN^2 + 4dN + 1.
\]
Under a direct implementation of the directional SELECT operation,
one use of this block-encoding has gate cost
\[
T_{K}=O(d\log N).
\]
Moreover, its normalization overhead satisfies
\[
\beta_{K}
:=
\frac{\alpha_{K}}{\norm{{K}}}
=
O(1).
\]
\end{lemma}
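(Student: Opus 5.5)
The plan is to realize $K$ as a linear combination of block-structured terms on an enlarged register and apply the LCU construction of \Cref{subsec:lcu} together with \Cref{lem:block-arithmetic}, exactly as in the proof of \Cref{lem:augmented-block-encoding}. I will first embed the three block rows/columns of $K$ into a two-qubit block register (padding the unused fourth block with zeros). The boundary register for $\mathbf v_B$ is padded to the same $md$ qubits as the interior registers, as was already done for $S=[A_{IB},\mathbf 0]$. Then
\[
K=(\ket{0}\bra{0}+\ket{1}\bra{1})\otimes L_h-\ket{0}\bra{1}\otimes I+\ket{1}\bra{2}\otimes A_{IB}+\ket{2}\bra{0}\otimes A_{\rm apm},
\]
where $A_{IB}$ and $A_{\rm apm}$ denote their square padded versions. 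Each block-register operator $\ket{i}\bra{j}$ is $O(1)$-local and is written as a constant-size combination of Paulis on two qubits. Equivalently, each term is block-encoded with unit normalization by $X$-gates together with a projector flagged on one ancilla, in the same way as $U_G$ in \cref{qcy}.

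Next I will collect the component block-encodings. For $L_h$ I use \Cref{lem:lap-be}; since only the diagonal blocks $(0,0)$ and $(1,1)$ are populated, one controlled use suffices, with normalization $\alpha_{L}=O(dh^{-2})$. For $A_{IB}$ I use the construction of \eqref{eq:s}, which gives $\alpha_S=2d/h^2$ via $2d$ partial-isometry encodings $U_G$ with unit norm each. For $A_{\rm apm}$ I use the same partial-isometry construction, now with the stencil coefficients $(-4,1)/(2h)$ absorbed into an LCU over two shifted selectors per face. This gives a normalization of order $d/h$, which is lower order. The identity term contributes $1$. Summing the LCU weights and substituting $h=2/(N+1)$ (or the paper's convention), then collecting powers of $N$, should reproduce $\alpha_K=6dN^2+4dN+1$. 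Here the $N^2$ terms come from $L_h$ and $A_{IB}$, the $N$ terms from $A_{\rm apm}$ and the lower-order parts of the $(N+1)^2$ expansions, and the constant from $-I$.

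The gate-cost claim follows because the outer PREP/SELECT acts on a constant-size coefficient register plus the $O(\log d)$ directional register. Each of the component block-encodings costs $O(d\log N)$: increments and reflections for $L_h$, and $X$-layers plus multi-controlled gates for the selectors. \Cref{lem:block-arithmetic} therefore gives $T_K=O(d\log N)$.

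For $\beta_K=O(1)$ I need a lower bound $\norm{K}=\Omega(dN^2)$. The upper bound $\norm{K}\le\alpha_K$ is automatic. I will take a unit vector supported on the $\mathbf u_I$ block equal to the top eigenvector of $L_h$. Then $K$ maps it to a vector whose first block is $L_h$ applied to it, with norm $\lambda_{\max}(L_h)=\Theta(dh^{-2})=\Theta(dN^2)$. Hence $\norm{K}\ge\lambda_{\max}(L_h)$ and $\beta_K\le\alpha_K/\lambda_{\max}(L_h)=O(1)$.

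The main obstacle is the exact bookkeeping of the constant in $\alpha_K$. It depends on the precise $h$-convention and on how the $A_{\rm apm}$ coefficients are split into LCU terms, and I expect to adjust that split so the stated value is matched exactly. Only the order $O(dN^2)$ matters for the $\beta_K$ claim.
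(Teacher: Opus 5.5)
Your proposal matches the paper's proof: $K$ is assembled by a single LCU whose weight is the sum $\alpha_{L_h}+\alpha_{A_{IB}}+\alpha_{A_{\rm apm}}+1$. The paper takes these as $4dN^2$, $2dN^2$, $4dN$ and $1$, which effectively means $h^{-1}=N$ rather than substituting $h=2/(N+1)$ and expanding $(N+1)^2$ (that substitution does not reproduce the stated formula). The bound $\beta_K=O(1)$ then follows from $\norm{K}\ge\lambda_{\max}(L_h)=\Theta(dN^2)$, which you justify more explicitly than the paper's ``obvious''; the exact constant is convention-dependent bookkeeping in both versions and does not affect the $O(1)$ overhead or the $O(d\log N)$ gate count.
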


\begin{proof}
    First, it is obvious that 
    $$
    \norm{K}\geq dN^2.
    $$
    As we discussed in \cref{subsec:be-circuit}, the block-encoding of matrix $B_{k,0}$ and $B_{k,1}$ is optimal, i.e.
    $$
    \alpha_B = \norm{B_{k,0}} = \norm{B_{k,1}}.
    $$
    According to the norm of tensor product matrix, it is evident that
    $$
    \norm{B_{k,0}} = \norm{B_{k,1}} = 1.
    $$
    From \eqref{eq:s}, the LCU normalization of $S$ is 
    $$
    \alpha_S = \alpha_{A_{IB}} =  2dN^2.
    $$
    With the similar analysis, we conclude that the LCU normalization of $A_{apm}$ is
    $$
    \alpha_{A_{apm}} = 4dN.
    $$
    In order to construct the block-encoding of the matrix $K$, the linear combination of unitaries (LCU) technique is deployed once more to assemble the constituent block matrices. As established in \Cref{sec:prelim}, the block-encoding coefficient for the block-diagonal matrices $L_h^{(d)}$ is $4dN^2$. Therefore, the total block-encoding coefficient of the matrix $K$ satisfies $\alpha_K = 6dN^2 + 4dN + 1$. Furthermore, given that the inequality $\alpha_K \geq \Vert{}K\Vert{}$ always holds and that $\Vert{}K\Vert{} \geq dN^2$, we conclude that 
    $$1\leq\beta_{K}\leq\frac{\alpha_K}{dN^2}\leq 7 = O(1).$$

    As for the gate complexity $T_K$, it readily follows from our explicit construction of the block-encoding that
    $$
    T_K = O(d\log N).
    $$
\end{proof}
\subsection{Discussion}
The coupled formulation has an explicit block-encoding.  Its QSVT-VTAA complexity is naturally expressed in terms of the condition number \(\kappa(K)\) of the boundary-augmented matrix.  In particular, if the one-use block-encoding cost is denoted by \(T_K\), then the QSVT-VTAA model gives a gate complexity of order
\[
\Oht\bigl(\kappa(K) T_K\log\frac{1}{\varepsilon}\bigr)
\]
for preparing, up to state error $\varepsilon$, the normalized augmented solution state, up to the normalization factors from the block-encoding. Furthermore, to extract solution $\ket{\mathbf u_I}$, another factor $\sqrt{1 + \rho_{N}^{2}}$ will be multiplied to the complexity estimate, where 
$$
\rho_N^2 = \frac{\norm{\mathbf v_I}^2 + \norm{\mathbf v_B}^2}{\norm{\mathbf u_{I}}^2}.
$$
Consequently, using \(T_K=O(d\log N)\) and \(\beta_K=O(1)\), the overall gate complexity for preparing the normalized physical solution state \(\ket{\mathbf u_I}\) is
\[
\Oht\!\left(
d\,\kappa(K)\sqrt{1+\rho_N^2}
\log\frac{1}{\varepsilon}
\right).
\]
A general singular-value estimate for \(K\) depends on the boundary discretization and dimension.  The numerical study in \Cref{subsec:num-coupled} is therefore included to quantify this conditioning behavior for the two-dimensional finite-difference test problem.

The same splitting idea can also be combined with spectral collocation or Galerkin spectral methods.  Classical spectral discretizations of biharmonic Dirichlet problems can be very ill-conditioned, and stabilized splitting formulations are known to improve the conditioning in Chebyshev or Legendre bases \cite{Heinrichs1991AST,legendre_colla}.  In a quantum algorithm, this improvement is valuable only if the boundary operators introduced by the splitting admit block-encodings with normalization and gate costs that do not offset the conditioning gain.

\section{Numerical experiments}\label{sec:numerics}

This section reports numerical experiments validating the proposed circuit constructions and the finite-difference discretizations used in the analysis. The first two experiments compare small QSVT circuit simulations with the corresponding classical discretizations and verify that the circuits implement the intended discrete solvers. The final coupled-Laplace experiment reports the discretization error and the empirical growth of the boundary-augmented condition number appearing in \Cref{sec:coupled}. The quantum circuits are simulated classically at small problem sizes using {\it UnitaryLab} \cite{UnitaryLab2026}.

It is noted that while asymptotic analysis uses QSVT–VTAA, the experiments implement simpler fixed-time QSVT inversion. This choice validates the core contribution—the block-encodings and discrete operators—since swapping solvers alters complexity but not the tested matrices.

\subsection{Periodic boundary conditions}\label{subsec:num-periodic}

We first consider the one-dimensional periodic biharmonic equation
\begin{equation}\label{eq:num-periodic}
\begin{cases}
u^{(4)}(x)=\pi^4\bigl(\sin(\pi x)+16\cos(2\pi x)\bigr),\\
u^{(k)}(-1)=u^{(k)}(1),\quad k=0,1,2,\ldots,
\end{cases}
\end{equation}
The exact solution is
\begin{equation}
    u(x)=\sin(\pi x)+\cos(2\pi x).
\end{equation}
The simulation uses $8$ collocation points. The result is shown in \Cref{fig:qsvtbi}; the quantum-circuit solution agrees with the classical spectral discretization at the same resolution, 
 confirming that the implemented circuit realizes the
intended discrete linear solve.

\begin{figure}[h]
    \centering
    \maybeincludegraphics[width=0.75\linewidth]{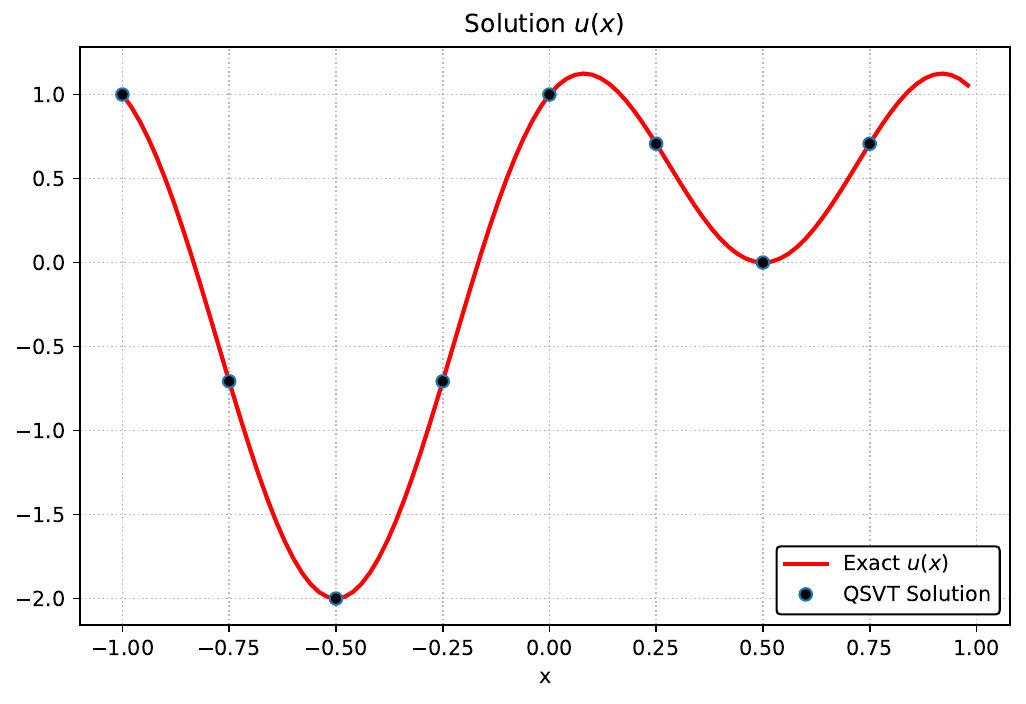}
    \caption{Periodic problem \eqref{eq:num-periodic}. Comparison between the exact solution, the classical spectral discretization, and the QSVT circuit simulation.}
    \label{fig:qsvtbi}
\end{figure}

\subsection{Simply supported boundary conditions}\label{subsec:num-ss}

We next solve
\begin{equation}\label{eq:num-ss}
\begin{cases}
u^{(4)}(x)=\pi^4\bigl(\sin(\pi x)+40.5\sin(3\pi x)\bigr),\\
u|_{\partial\Omega}=0,
\qquad \Delta u|_{\partial\Omega}=0,
\end{cases}
\end{equation}
The exact solution is
\begin{equation}
    u(x)=\sin(\pi x)+0.5\sin(3\pi x).
\end{equation}
Using $15$ interior points, the infinity-norm error of the quantum simulation is $3.298\times 10^{-2}$, while the corresponding classical finite-difference error is $3.286\times 10^{-2}$. The comparison is shown in \Cref{fig:dstbi}. 
This agreement confirms that the QSVT circuit realizes the intended
finite-difference linear solve at the tested resolution.
\begin{figure}[ht]
    \centering
    \maybeincludegraphics[width=0.75\linewidth]{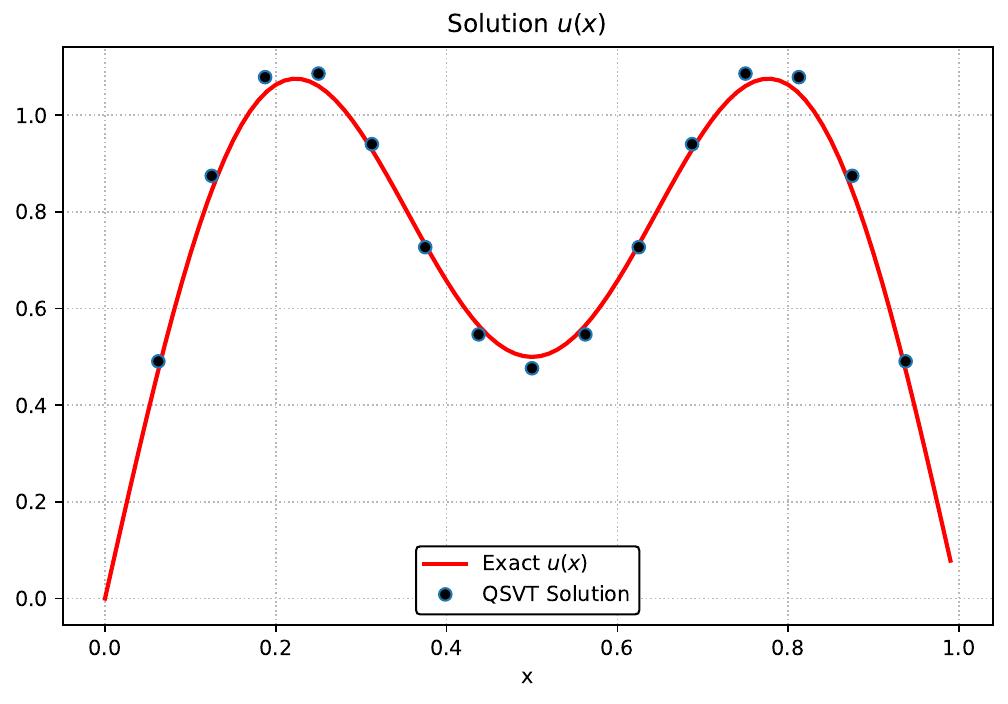}
    \caption{Simply supported problem \eqref{eq:num-ss}. The quantum simulation and the classical finite-difference solution have comparable errors at the same grid resolution.}
    \label{fig:dstbi}
\end{figure}

\subsection{Direct Dirichlet--Neumann discretization}\label{subsec:num-direct}

To check the convergence of the direct discretization, we consider the three-dimensional test problem with exact solution
\begin{equation}
    u(x,y,z)=x(1-x)\sin(\pi y)\sin(\pi z).
\end{equation}
The numerical results are summarized in \Cref{tab:3d-convergence}. Both the $L^2$ and $L^\infty$ errors exhibit second-order convergence.

\begin{table}[ht]
\centering
\caption{Convergence of the direct finite-difference discretization for a three-dimensional Dirichlet--Neumann test problem.}
\label{tab:3d-convergence}
\scriptsize
\setlength{\tabcolsep}{3pt}
\begin{tabular}{ccccc}
\toprule
$N$ & $h$ & $\norm{e}_{L^2}$ & order & $\norm{e}_{L^\infty}$ \\
\midrule
10 & $1.00\!\times\!10^{-1}$ & $3.44\!\times\!10^{-4}$ & -- & $1.24\!\times\!10^{-3}$ \\
15 & $6.67\!\times\!10^{-2}$ & $1.55\!\times\!10^{-4}$ & 1.9687 & $5.43\!\times\!10^{-4}$ \\
20 & $5.00\!\times\!10^{-2}$ & $8.74\!\times\!10^{-5}$ & 1.9853 & $3.14\!\times\!10^{-4}$ \\
25 & $4.00\!\times\!10^{-2}$ & $5.61\!\times\!10^{-5}$ & 1.9914 & $2.00\!\times\!10^{-4}$ \\
\bottomrule
\end{tabular}
\end{table}

\FloatBarrier
\subsection{Coupled Laplace formulation}\label{subsec:num-coupled}

This experiment complements the coupled formulation in \Cref{sec:coupled}. It checks the convergence of the finite-difference splitting discretization and reports the empirical scaling of the condition number \(\kappa(K)\) that enters the QSVT-VTAA cost discussion for \eqref{eq:split-linear}.

We consider a two-dimensional homogeneous Dirichlet--Neumann problem with exact solution
\begin{equation}
    u(x,y)=p(x)p(y),
    \qquad p(t)=t^2(1-t)^2.
\end{equation}
The source term is chosen so that \(\Delta^2u=f\).  The coupled Laplace system in \eqref{eq:split-linear} shows second-order convergence in the \(L^2\) norm, as shown in \Cref{fig:coupled-convergence}. The observed second order convergency verifies the validity of our approach. The measured condition number grows approximately like \(N^{3}\) in this two-dimensional test, as shown in \Cref{fig:coupled-kappa}. In this two-dimensional test, the coupled formulation exhibits a smaller empirical condition-number exponent than the direct fourth-order matrix. Since these plots quantify the numerical behavior of the coupled discretization and the associated condition number, they are reported together with the other validation experiments in this section.

\begin{center}
    \maybeincludegraphics[width=0.95\linewidth]{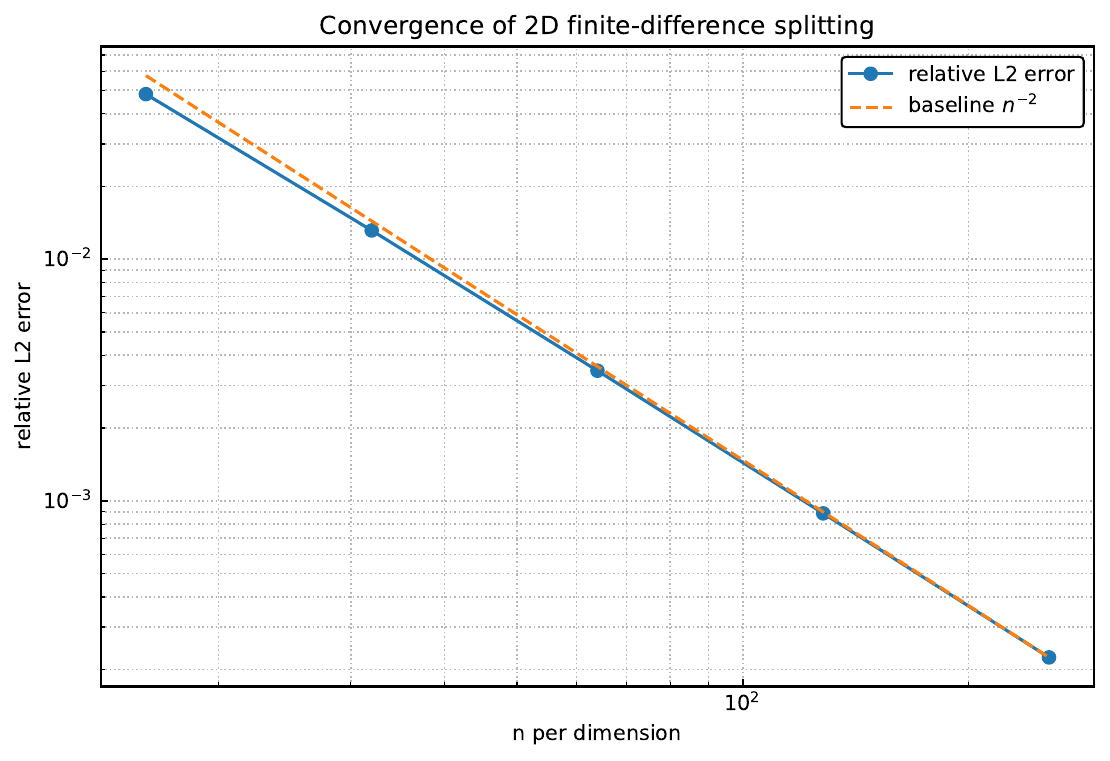}
    \captionof{figure}{Convergence order for the coupled Laplace finite-difference discretization.}
    \label{fig:coupled-convergence}
\end{center}

\begin{center}
    \maybeincludegraphics[width=0.95\linewidth]{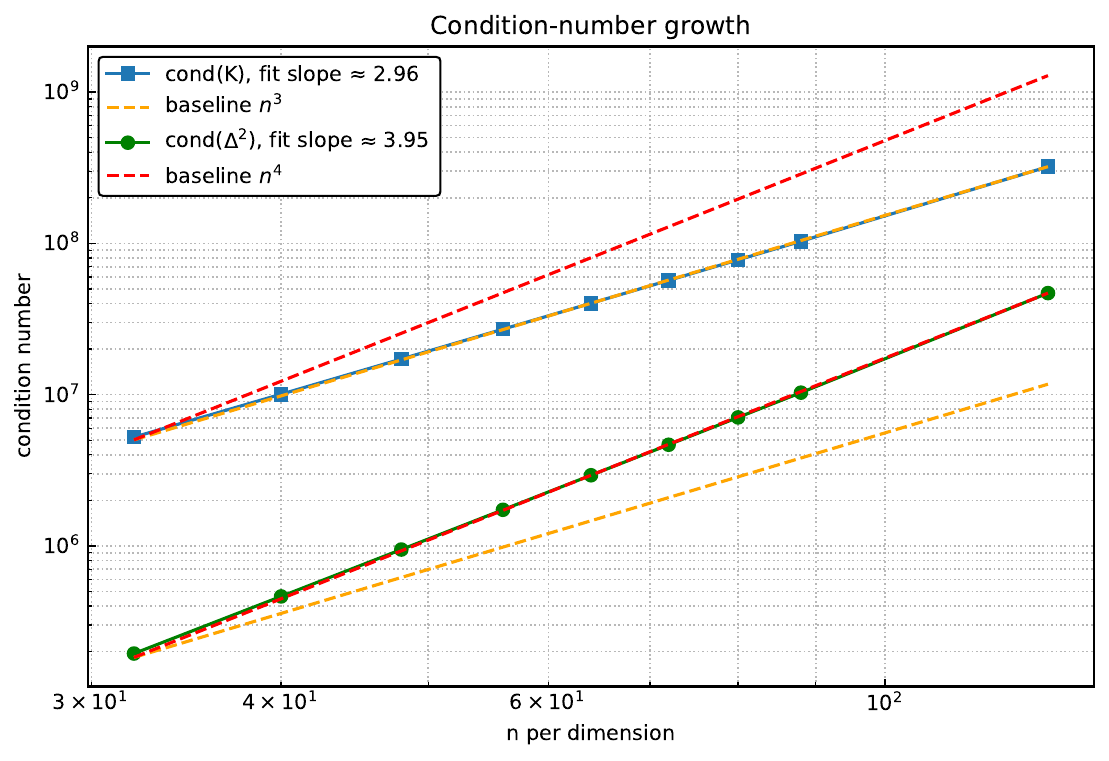}
    \captionof{figure}{Condition-number growth for the coupled Laplace system.}
    \label{fig:coupled-kappa}
\end{center}

\FloatBarrier
\section{Conclusion}\label{sec:conclusion}

We have developed QSVT--VTAA quantum linear-system formulations and
explicit block-encodings for several discretizations of the biharmonic
equation. For periodic and simply supported boundary conditions, the biharmonic equation can be reduced to two Poisson solves after diagonalization by the Fourier transform or the discrete sine transform. For Dirichlet--Neumann boundary conditions, we provided an explicit block-encoding of a direct fourth-order discretization and described an alternative coupled Laplace system. The analysis separates discretization error, block-encoding cost, condition-number dependence, and postselection cost. Numerical simulations confirm that the circuits reproduce the corresponding classical discretized solutions for small problem sizes.

Several questions remain for future work. First, a sharp stability theory for the splitting formulation under minimal assumptions would complete the condition-number analysis of the coupled system. Second, preconditioning may further reduce the condition number and complexity. Third, the block-encodings of boundary operators can likely be optimized further. Finally, implementing the proposed circuits on real quantum hardware or more detailed noise models would clarify their practical requirements.

\section*{Acknowledgements}
CM was supported by Fundamental and Interdisciplinary Disciplines Breakthrough Plan of the Ministry of Education of China (JYB2025XDXM112) and NSFC grant No.~12501607.  CM was also partially supported by the Science and Technology Commission of Shanghai Municipality (No.~22DZ2229014).

The authors thank {\it UnitaryLab}  for providing the
quantum-circuit design and classical simulation
platform; the circuit diagrams in this manuscript
were prepared using {\it UnitaryLab}.
\section*{Code availability}
The corresponding source code is publicly available in the GitHub repository at 
\url{https://github.com/TzhMo/Quantum-Biharmonic-Solver}

\end{document}